\newcommand{\Z}{\mathbb{Z}}
\newcommand{\abelian}{{\mathcal{G}}}
\newcommand{\df}{\mathrm{DDF}}
\newcommand{\extl}{{\mathcal{E}}}
\newcommand{\intl}{{\mathcal{I}}}
\newcommand{\orb}{{\mathcal{O}}}
\newcommand{\s}{{\mathbf s}}
\newcommand{\w}{{\mathbf w}}
\newcommand{\fhs}{{\mathcal{F}}}
\newcommand{\packing}{{\mathcal{P}}}
\newcommand{\hyperpl}{{\mathcal{H}}}
\newcommand{\group}{\langle \tau \rangle}
\newcommand{\gf}[1]{\ensuremath{\mathrm{GF}(#1)}}
\newcommand{\rot}[1]{\ensuremath{\overset{\Leftrightarrow}{#1}}}
\newtheorem{definition}{Definition}[section]
\newtheorem{lem}[definition]{Lemma}
\newtheorem{theorem}[definition]{Theorem}
\newtheorem{example}[definition]{Example}
\newtheorem{rem}[definition]{Remark}
\newenvironment{remark}{\begin{rem} \em}{\hfill $\Box$ \end{rem}}
\newenvironment{lemma}{\begin{lem} \em}{\hfill $\Box$ \end{lem}}
\newenvironment{proof}{{\bf Proof:} }{\hfill $\Box$}  
\begin{document}
\title{Disjoint difference families and their
  applications} \author{S.~-L.~Ng\footnote{Information Security Group, Royal Holloway University of London, Egham, Surrey TW20 0EX, United Kingdom. {\tt s.ng@rhul.ac.uk}},  M.~B.~Paterson\footnote{Department of Economics, Mathematics and Statistics, Birkbeck, University of London, Malet Street, Bloomsbury,
London WC1E 7HX, United Kingdom. {\tt m.paterson@bbk.ac.uk}}} 
\date{October 8, 2015}
\maketitle

\begin{abstract}
  Difference sets and their generalisations to difference
  families arise from the study of designs and many other
  applications.  Here we give a brief survey of some of these
  applications, noting in particular the diverse definitions of
  difference families and the variations in priorities in
  constructions.  We propose a definition of disjoint difference
  families that encompasses these variations and allows a comparison of
  the similarities and disparities.  We then focus on two
  constructions of disjoint difference families arising from frequency
  hopping sequences and show that they are in fact the same.
  We conclude with a discussion of the notion of equivalence for
  frequency hopping sequences and for disjoint difference families.

{\bf Keywords:} {Frequency hopping sequences, difference families, m-sequences, finite geometry.}
 
{\bf Classification:} {94C30, 51E20, 94A62, 05B10.}

\end{abstract}

\section{Introduction}

Difference sets and their generalisations to difference families arise
from the study of designs and many other applications.  In particular,
the generalisation of difference sets to internal and external
difference families arises from many applications in communications and
information security.  Roughly speaking, a difference family consists
of a collection of subsets of an abelian group, and internal
differences are the differences between elements of the same subsets,
while external differences are the differences between elements of
distinct subsets.  Most of the definitions do not coincide exactly
with each other, understandably since they arise from diverse
applications, and the priorities of maximising or minimising various
parameters are also understandably divergent.  However, there is
enough overlap in these definitions to warrant a study of how they
relate to each other, and how the construction of one family may
inform the construction of another.  One of the aims of this paper is
to perform a brief survey of these difference families, noting the
variations in definitions and priorities, and to propose a definition
that encompasses these definitions and allows a more unified study of
these objects.

One particular class of internal difference family arises from
frequency hopping (FH) sequences.  FH sequences allow many transmitters to
send messages simultaneously using a limited number of channels and it
transpires that the question of how efficiently one can send messages
has to do with the number of internal differences in a collection of
subsets of frequency channels. The seminal paper of Lempel and
Greenberger \cite{lempelgreenberger} gave optimal FH sequences using
transformations of linear feedback shift register (LFSR) sequences.  In
another paper by Fuji-Hara {\it et al.} \cite{FHS} various families of
FH sequences were constructed using designs with particular automorphisms, and
the question was raised there as to whether these constructions are
the same as the LFSR constructions in \cite{lempelgreenberger}.  Here
we show a correspondence between one particular family of
constructions in \cite{FHS} and that of \cite{lempelgreenberger}.

The relationship between the equivalence of difference families and
the equivalence of the designs and codes that arise from them has been
much studied.  Here we will focus on the notion of equivalence for
frequency hopping sequences and for disjoint difference families. 

\subsection{Definitions}

Let $\abelian$ be an abelian group\footnote{The Handbook of
  Combinatorial Designs \cite{handbook} has more material on
  difference families defined on non-abelian groups, but we will focus
  on abelian groups here since most of the applications we examine
  use abelian groups.} of size $v$, and let $Q_0, \ldots,
Q_{q-1}$ be disjoint subsets of $\abelian$, $|Q_i|=k_i$, $i=0, \ldots,
q-1$.  We will call $(\abelian; Q_0, \ldots, Q_{q-1})$ a
\emph{disjoint difference family} $\df(v; k_0, \ldots, k_{q-1})$ over
$\abelian$ with the following \emph{external} $\extl(\cdot)$ and
\emph{internal} $\intl(\cdot)$ \emph{differences}:

\begin{eqnarray*}
\extl_{i,j}(d) & = & \{ (a,b) \; : \; a -b = d, a \in Q_i, b \in Q_j,j\neq i\},\\ 
\extl_{i}(d) & = & \{ (a,b) \; : \; a -b = d, a \in Q_i, b \in Q_j, 
                      j = 0, \ldots, q-1, j \neq i\}, \\ 
\extl(d) & = & \{ (a,b) \; : \; a -b = d, a \in Q_i, b \in Q_j, 
                    i, j = 0, \ldots, q-1, i \neq j \}, \\
\intl_{i}(d) & = & \{ (a,b) \; : \; a -b = d, a, b \in Q_i, a \neq b\},\\
\intl(d) & = & \{ (a,b) \; : \; a -b = d, a, b \in Q_i, a \neq b,
                 i = 0, \ldots, q-1 \}. 
\end{eqnarray*}

We will call the $\df$ \emph{uniform} if all the $Q_i$ are of the same
size, and we will say it is a \emph{perfect}\footnote{The term
  \emph{perfect} is used in \cite{handbook} to refer to a specific
  type of difference family where half the differences cover half the
  ground set.  Our usage is found in \cite{TonchevDSS} in relation to
  self-synchronising codes.}  internal (or external) $\df$ if
$|\intl(d)|$ (or $|\extl(d)|$) is a constant for all $d \in \abelian\setminus\{0\}$.
We will call the $\df$ a \emph{partition type} $\df$ if $\{Q_0, \ldots,
Q_{q-1}\}$ is a partition of $\abelian$.

\begin{remark}
As mentioned before and as will be pointed out in Section
\ref{sec:ddf}, there is by no means a consensus on the terms used to
describe a $\df$.  Here we point out the disparity between our terms and
those of \cite{handbook}, and in Section \ref{sec:ddf} we will point
out the differences as they arise.
In particular, the definition of difference family in \cite{handbook}
  stipulates that the subsets $Q_i$ are all of the same size, but does
  not insist that they are disjoint.  We have defined a $\df$ to consist
  of disjoint subsets (of varying sizes) because we want to be able to
  define \emph{external} differences.  Using the term \emph{uniform}
  to describe the subsets $Q_i$ being of the same size is consistent
  with terminology used in design theory.
\end{remark}

\begin{example} \label{eg:diffset}
A $(v,k, \lambda)$-difference set $Q_0$ over $\Z_v$ is a perfect
internal $\df(v;k)$ with $|\intl(d)| = \lambda = k(k-1)/(v-1)$.  If we
let $Q_1=\Z_v \setminus Q_0$ then $(\Z_v; Q_0, Q_1)$ is an internal
$\df(v; k, v-k)$ with $|\intl_0(d)| = \lambda$ and
$|\intl_1(d)|=v-2k+\lambda$ for all $d \in \Z_v^*$.  In fact, $(\Z_v;
Q_0, Q_1)$ has $|\intl(d)|=v-2k+2\lambda$ and $|\extl(d)| = v(v-1) -
(v-2k+2\lambda)$ and is a perfect internal and external $\df$.

For example, the $(7, 3, 1)$ difference set $Q_0=\{0,1,3\} \subseteq
\Z_7$.  We have $|\intl(d)| = 1$.  Let $Q_1=\Z_7 \setminus Q_0
=\{2,4,5,6 \}$.  Then $(\Z_7;Q_0, Q_1)$ has $|\intl_0(d)|=1$, $|\intl_1(d)|=2$, $|\intl(d)|=3$, $|\extl_0(d)| =
|\extl_1(d)|=2$ and $|\extl(d)|=4$ for all $d \in \Z_7^*$.
\end{example}

It is not hard to see that a perfect partition type internal $\df$ is
also a perfect partition type external $\df$ and vice versa.  However,
this is not generally true for $\df$s that are not partition type:

\begin{example} \label{eg:notperfect}
Let $\abelian = \Z_{25}$, $Q_0 = \{1,2,3,4,6,15\}$, $Q_1=\{5,9,10,14,17,24\}$.
This is a perfect external $\df(25; 6,6)$ with $|\extl(d)|=3$ for all $d \in \Z_{25}^*$ given in \cite{selfsyn}.  However, it is not a perfect internal $\df$: 
$$|\intl(d)| =\left\{
\begin{array}{ccl}
4 & \mbox{ for } & d=1, 24, \\
2 &  \mbox{ for } & d= 7,9,10,15,16,18, \\ 
1 &  \mbox{ for } & d= 6,8,17,19, \\
3 & \mbox {for} & \mbox{all other } d. 
\end{array} \right.
$$
\end{example}

For many codes and sequences \cite{FHS,AMD,selfsyn,auth,OOC}, desirable
properties can be expressed in terms of (some) external or internal
differences of $\df$s.  We give a brief survey of these
applications and the properties required of the $\df$s in
the next section.

\section{Disjoint difference families in applications}
\label{sec:ddf}

This is not intended to be a comprehensive survey of where disjoint
difference families arise in applications, nor of each application.
We want to show that these objects arise in many areas of
communications and information security research and that a study of
their various properties may be useful in making advances in these
fields.

\subsection{Frequency hopping (FH) sequences}
\label{sub:FHS}

Let $F=\{f_0, \ldots, f_{q-1}\}$ be a set of frequencies used in a
frequency hopping multiple access communication system
\cite{spreadspectrum}.  A frequency hopping (FH) sequence $X$ of
length $v$ over $F$ is simply $X=(x_0, x_1, \ldots, x_{v-1})$, $x_i
\in F$, specifying that frequency $x_i$ should be used at time $i$.
If two FH sequences use the same frequency at the same time (a
collision), the messages sent at that time may be corrupted.
Collisions are given by Hamming correlations: if a single sequence
together with all its cyclic shifts are used then we are interested in
its auto-correlation values (the number of positions in which each
cyclic shift agrees with the original sequence).  If two or more
sequences are used then it is also necessary to consider the
cross-correlation between pairs of sequences (the number of positions
in which cyclic shifts of one sequence agree with the other sequence
in the pair).

A single FH sequence $X$ may be viewed in a combinatorial way: Define
$Q_i$, $i =0, \ldots, q-1$, as subsets of $\Z_v$, with $j \in Q_i$ if
$x_j = i$.  Hence each $Q_i$ corresponds to a frequency $f_i$, and the
elements of $Q_i$ are the positions in $X$ where $f_i$ is used.  (For
example, the frequency hopping sequence $X=(0,0,1,0,1,1,1)$ over
$F=\{0, 1\}$ gives the $\df$ of Example \ref{eg:diffset}.)  In
\cite{FHS} it was shown that an FH sequence $(x_0, x_1,
\ldots, x_{v-1})$ with out-of-phase auto-correlation value of at most
$\lambda$ exists if and only if $(\Z_v; Q_0, \ldots, Q_{q-1})$ is a
partition type $\df(v; k_0, \ldots, k_{q-1})$ with $\intl(d)$
satisfying

$$|\intl(d)| \le \lambda \mbox{ for all } d \in \Z_v^*.$$
In \cite{FHS} $(\Z_v; Q_0, \ldots, Q_{q-1})$ is called a 
partition type difference packing.  

The aim in FH sequence design is to minimise collisions: we would like
$\lambda$ to be small.  Lempel and Greenberger
\cite{lempelgreenberger} proved a lower bound for $\lambda$, and in
\cite{PengFan} bounds relating the size of sets of frequency hopping
sequences with their Hamming auto- and cross-correlation values were
given.  Lempel and Greenberger \cite{lempelgreenberger} constructed
optimal sequences using transformations of m-sequences (more details
in Section \ref{sub:LG}).  In \cite{FHS} Fuji-Hara \emph{et al.} also
provided many examples of optimal sequences using designs with certain
types of automorphisms.  Other constructions of FHS include using
cyclotomy \cite{cyclotomy,generalcyclotomy}, random walks on expander
graphs \cite{randomwalk}, and error-correcting codes
\cite{FHSerrorcodes,FHScycliccodes}.  A survey of sequence design from
the viewpoint of codes can also be found in \cite{SarwateCDMA}.  Later
in this paper we will show that one of the constructions in \cite{FHS}
by Fuji-Hara \emph{et al.} gave the same sequences as those
constructed by Lempel and Greenberger in \cite{lempelgreenberger}. It
would be interesting to see how the other constructions relate to each other.

Note that in this correspondence to a difference family, the set of
frequency hopping sequence is the rotational closure (Definition
\ref{def:rotational}, Section \ref{sec:equiv}) of one single frequency
hopping sequence.  Collections of $\df$s were used to model more
general sets of sequences in
\cite{balancednested2,balancednested1,mixeddiff},  referred to as
balanced nested difference packings.

It is also to be noted that most of the published work considered
either pairwise interference between two sequences (described above
as Hamming correlation) or adversarial interference (jamming)
\cite{randomwalk,jammingfeasibility,jammingsurvey}, which may not
reflect the reality of the application where more than two sequences
may be in use.  To this end Nyirenda \emph{et al.} \cite{Mwawi}
modelled frequency hopping sequences as cover-free codes and
considered additional properties required to resist jamming.

\subsection{Self-synchronising codes}
\label{sub:selfsyn}

Self-synchronising codes are also called comma-free codes and have the
property that no codeword appears as a substring of two concatenated
codewords.  This allows for synchronisation without external help.
Codes achieving self-synchronisation in the presence of up to
$\lfloor \frac{\lambda-1}{2} \rfloor$ errors can be constructed from a
$\df(v; k_0, \ldots, k_{q-1})$ $(\Z_v; Q_0, \ldots, Q_{q-1})$ with
$|\extl(d)| \ge \lambda$.  In \cite{selfsyn}, this
$\df$ was called a {\em difference system of sets} of index
$\lambda$ over $\Z_v$.  The sets $Q_0, \ldots, Q_{q-1}$ give the markers for
self-synchronisation and are a redundancy, hence we would like
$k=\sum_{i=0}^{q-1} k_i$ to be small.  Other optimisation problems
include reducing the rate $k/v$, reducing $\lambda$, and reducing the
number $q$ of subsets.

An early paper by Golomb \emph{et al.} \cite{golomb-comma-free} took the
combinatorial approach to the subject of self-synchronising codes, and
\cite{comma-free} gave a survey of results, constructions and open
problems of self-synchronising codes. More recent work on
self-synchronising codes can be found in \cite{self-reflective} which
gave some variants on the definitions, and in \cite{nonoverlapping}
in the guise of non-overlapping codes, giving constructions and
bounds. Further constructions can be found in \cite{selfsyn}, including
constructions from the partitioning of cyclic difference sets and
partitioning of hyperplanes in projective geometry, as well as
iterative constructions using external and internal $\df$s.

\subsection{Splitting A-codes and secret sharing schemes with cheater detection}
\label{sub:acodes}

In authentication codes (A-codes), a transmitter and a receiver share
an encoding rule $e$, chosen according to some specified probability
distribution.  To authenticate a source state $s$, the transmitter
encodes $s$ using $e$ and sends the resulting message $m=e(s)$ to the
receiver.  The receiver receives a message $m'$ and accepts it if it is a valid encoding of some source, {\it i.e.} when
$m'=e(s^\prime)$ for some source $s^\prime$. In a splitting A-code, the message is computed with an
input of randomness so that a source state is not uniquely mapped to a
message. An adversary (who does not know which encoding rule is being used) may send their own message $m$ to the receiver in the hope that it will be accepted as valid.  This is known as an  {\em impersonation attack}, and succeeds if $m$ is a valid encoding of some source $s$.  Also of concern are {\em substitution attacks}, in which an adversary who has seen an encoding $m$ of a source $s$ replaces it with a new value $m^\prime$.  This attack succeeds if $m^{\prime}$ is a valid encoding of some source $s^\prime \neq s$.  We refer to \cite{auth} for further background.  It was shown
in \cite{auth} that optimal splitting A-codes can be constructed from
a perfect uniform external $\df(v; k_0=k, \ldots, k_{q-1}=k)$ with
$|\extl(d)| = 1$.  This gives an A-code with $q$ source states, $v$
encoding rules, $v$ messages, and each source state can be mapped to
$k$ valid messages.  This type of $\df$ was called an external difference
family (EDF) in \cite{auth}.  The probability of an
adversary successfully impersonating the transmitter is given by
$kq/v$ and the probability of successfully substituting a message
being transmitted is given by $1/kq$ (which also happens to equal
$k(q-1)/(v-1)$ in this particular context).  These are parameters to
be minimised.

An extensive list of A-code references prior to 1998 is given in \cite{acodebib}.  More recent work on splitting authentication codes includes \cite{blundo1999fallacious,ding2003three,ding2004three,ge2005combinatorial,Huber09,Huber10,ftcitHuber10,huber2010combinatorial,Huber10arxiv,kurosawa2001combinatorial,liang2011new,liang2012new,obana2001bounds,pei2006authentication,wang2006new,wang2010further}

A {\em secret sharing scheme} is a means of distributing some information, known as {\em shares}, to a set of {\em players} so that authorised subsets of players are able to combine their shares to reconstruct a unique secret, whereas the shares belonging to unauthorised subsets reveal no information about the secret.  If some of the players are dishonest, however, then they may {\em cheat} by submitting false values that are not their true shares and thereby causing an incorrect value to be obtained during secret reconstruction. Such attacks were first discussed by Tompa and Woll in \cite{TompaW88}.  Various types of difference family have been used in constructing schemes which allow such cheating to be detected with high probability.  In \cite{OgataKS06}, difference sets were used to construct schemes that were optimal with respect to certain bounds on the sizes of shares.  In \cite{auth}, EDFs were used in a similar manner to construct optimal schemes.  Other schemes that permit detection of cheaters include those proposed in \cite{ArakiO07,CabelloPS02,HoshinoO15,ObanaA06,Obana11,ObanaT14}.  Many of these constructions can be interpreted as involving particular types of difference family; this observation has led to the definition of the concept of {\em algebraic manipulation detection codes} \cite{CramerDFPW08} (see Section~\ref{sub:wAMD}).

\subsection{Weak algebraic manipulation detection (AMD) codes}
\label{sub:wAMD}

An AMD code is a tool that can be combined with a cryptographic system that provides some
form of secrecy in order to incorporate extra robustness against an adversary who can actively change values in the system.  The notion was proposed in \cite{CramerDFPW08} as an abstraction of techniques used in the construction of robust secret sharing schemes.  In the basic setting for a weak AMD code, a {\em source} is chosen uniformly from a finite set $S$ of sources with $|S|=k$.  It is then encoded using a (possibly randomised) encoding map $E\colon S \rightarrow \abelian$ where $\abelian$ is an abelian group of order $v\geq k$. We require the sets of possible encodings of different sources to be disjoint, so that $E(s)$ uniquely determines $s$.  An adversary is able to manipulate this encoded value by adding a group element $d\in \abelian$ of its choosing.  (We suppose the adversary knows the details of the encoding function, but does not know what source has been chosen, nor the specific value of any randomness used in the encoding.)  After this manipulation, an attempt is made to decode the resulting value.  If the altered value $E(s)+d$ is a valid encoding $E(s^\prime)$ of some source $s^\prime$ then it is decoded to $s^\prime$.  Otherwise, decoding fails and the symbol $\perp$ is returned; this represents the situation where the adversary's manipulation has been detected.  The adversary is deemed to have succeeded if $E(s)+d$ is decoded to $s^\prime \neq s$, that is if they have caused the stored value to be decoded to a source other than the one that was initially stored.  

A set of sources $S$ with $|S|=k$, abelian group $\abelian$ with $|\abelian|=v$ and encoding rule $E$ constitute a {\em weak $(k,
v, \epsilon)$-AMD (algebraic manipulation detection) code} if for any choice of $d\in \abelian$ the adversary's success probability is at most $\epsilon$.  (The probability is taken over the uniform choice of source, and over the randomness used in the encoding.)

In \cite{CramerDFPW08}, it was shown that a weak $(k, v, \epsilon)$-AMD code with deterministic encoding is
equivalent to a $\df(v; k)$ with
$$|\intl(d)| \le \lambda, \; \lambda \le \epsilon k\; \mbox{ for all } d
\in \abelian.$$

In \cite{CramerDFPW08} these were called $(v, k, \lambda)$-bounded difference
sets.  It is easy to see that these are generalisations of difference
sets, allowing general abelian groups and with an upper bound for the
number of differences.  

Weak AMD codes were introduced in \cite{CramerDFPW08}, with further detail on constructions, bounds and applications provided in the full version of the paper \cite{CramerDFPW08eprint}.  
Bounds on the adversary's success probability in a weak AMD code were
given in \cite{AMD} and several families with good asymptotic
properties were constructed using vector spaces.  Additional bounds
were given in \cite{AMD15}, and constructions and characterisations were
given relating weak AMD codes that are optimal with respect to these
bounds to a variety of types of external $\df$.  It is desirable to
minimise the tag length ($\log v - \log k$, the number of redundant
bits) as well as $\epsilon$.

\subsection{Stronger forms of algebraic manipulation detection (AMD) code}
\label{sub:AMD}

{\em Strong AMD codes} were defined in \cite{CramerDFPW08}; these are able to limit the success probability of an adversary even when the adversary knows which source has been encoded.  Specifically, for
every source $s \in S$ and every element $d \in \abelian$, the
probability that $(E(s)+d)$ is decoded to a value $s^\prime \not \in \{ s, \perp \}$ is at most
$\epsilon$.  (Here the probability is taken over the randomness in the encoding rule $E$.  Unlike the case of a weak AMD code, a strong AMD code cannot use a deterministic encoding rule.)

  Write $Q_i = \{ g \in \abelian \; : \; D(g) = s_i \}$ for
each $s_i \in S$, $i = 0, \ldots, k-1$, and $|Q_i|=k_i$.  In the case
where the encoding $E(s_i)$ is uniformly distributed over $Q_i$ for
every $s_i$, we have that $(\abelian; Q_0, \ldots, Q_{k-1})$ forms a $\df(v; k_0,
\ldots, k_{k-1})$ with $|\extl_i(d)| \le \lambda_i =\epsilon k_i$ and
$|\extl(d)| \le \lambda = \sum_{i=0}^{k-1} \lambda_i$. 

 Constructions
from vector spaces and caps in projective space were given in
\cite{AMD}. 
Additional bounds and characterisations were given in \cite{AMD15}.  A construction based on a polynomial over a finite field was given in \cite{CramerDFPW08} and applied to the construction of robust secret sharing schemes, and robust fuzzy extractors.  This construction has since been used for a range of applications, including the construction of anonymous message transmission schemes \cite{broadbent2007information}, non-malleable codes \cite{DziembowskiPW10}, strongly decodeable stochastic codes \cite{GuruswamiSmith}, secure communication in the presence of a byzantine relay \cite{he2009secure,he2013strong}, and codes for the adversarial wiretap channel \cite{wang2014efficient}.  New constructions, including an asymptotically optimal randomised construction were given in \cite{CramerPX15}.

AMD codes that resist adversaries who learn some limited information about the source were constructed and analysed in \cite{AhmadiS13}, and their application to tampering detection over wiretap channels was discussed.

AMD codes secure in a stronger model in which an adversary succeeds even when producing a new encoding of the original source have been used in the design of secure cryptographic devices and related applications \cite{ge2013reliable,ge2013secure,karpovsky2014design,luo2013secure,luo2014hardware,wang2012new,WangK11}.

\subsection{Optical orthogonal codes (OOCs)}
\label{sub:OOC}

Optical orthogonal codes (OOCs) are sequences arising from
applications in code-division multiple access in fibre optic channels.
OOC with low auto- and cross-correlation values allow users to
transmit information efficiently in an asynchronous environment.  A
$(v,w,\lambda_a, \lambda_c)$-OOC of size $q$ is a family $\{ X_0,
\ldots, X_{q-1}\}$ of $q$ $(0,1)$-sequences of length $v$, weight $w$,
such that auto-correlation values are at most $\lambda_a$ and
cross-correlation values are at most $\lambda_c$.  For each sequence
$X_i$, let $Q_i$ be the set of integers modulo $v$ denoting the
positions of the non-zero bits.  Then $(\Z_v; Q_0, \ldots, Q_{q-1})$
is a uniform $\df(v; k_0=w \ldots, k_{q-1}=w)$ with
\begin{eqnarray*}
|\intl_i(d)| & \le & \lambda_a, \\
|\extl_{i,j}(d)| & \le & \lambda_c, \mbox{ for all } d \in \Z_v^*.
\end{eqnarray*}

Background and motivation to the study of OOC were given in \cite{OOC}, which also 
included constructions from designs, algebraic codes and projective geometry.  In \cite{bitan} constant weight cyclically permutable codes, which are also uniform $\df$s, were used to construct OOC, and a recursive construction was given.
In \cite{sensing} OOC were used to construct compressed sensing matrix and
a relationship between OOC and modular Golomb rulers (\cite{handbook})
was given - a $(v,k)$
modular Golomb ruler is a set of $k$ integers $\{d_0, \ldots,
d_{k-1}\}$ such that all the differences are distinct and non-zero
modulo $v$ - in fact, a $\df(v; k)$ with $|I(d)| \le 1$ for all $d
\neq 0$.

A generalisation to two-dimensional OOC with a combinatorial approach
can be found in \cite{2D-OOC,combinatorialOOC}.
Combinatorial and recursive constructions as well as bounds can be
found in \cite{boundOOC}, and \cite{variableOOC} allowed variable
weight OOC and used various types of difference families and designs
to construct such OOCs.

\subsection{Other applications}
\label{sub:other}

The list of applications discussed in this section is by no means 
exhaustive, and $\df$s arise in a variety of other areas of combinatorics
and coding theory.  For example, in \cite{CompleteDDF}, complete sets
of disjoint difference families (in fact, partition type perfect
uniform $\df$s where the subsets are grouped) were used in constructing
{\em 1-factorisations of complete graphs} and in constructing {\em cyclically
resolvable cyclic Steiner systems}. In \cite{quasicyclic}, {\em high-rate
quasi-cyclic codes} were constructed using perfect internal uniform $\df$, and a
generalisation to families of sets of non-negative integers with
specific internal differences was given. {\em $\mathbb{Z}$-cyclic whist tournaments} correspond to perfect internal $\df$s over $\mathbb{Z}_v$ \cite{whist}.  In addition, various types of sequences and arrays with specified correlation properties have been proposed for a wide range of applications \cite{golomb2005signal,sequencecorrelation}.  Many of these can be studied in terms of a relationship with appropriate forms of $\df$s  \cite{pott1999difference}.

\section{A geometrical look at a perfect partition type disjoint difference 
family}
\label{sec:geometric}

In \cite{FHS} a perfect partition type 
$\df(q^{n}-1; k_0=q-1,k_1=q, \ldots, k_{q^{n-1}-1}=q)$ over $\Z_{q^n-1}$ 
was constructed from line orbits of a cyclic perspectivity $\tau$ in
the $n$-dimensional projective space $PG(n,q)$ over $\gf{q}$.  In
\cite{lempelgreenberger} another construction with the same parameters
was given.  In the next section we will show a correspondence between
the two constructions.  Before that we will describe in greater detail
the the construction of \cite[Section III]{FHS}.

An $n$-dimensional projective space $PG(n,q)$ over the finite field of
order $q$ admits a cyclic group of perspectivities $\group$ of order
$q^n-1$ that fixes a hyperplane $\hyperpl_{\infty}$ and a point
$\infty \notin \hyperpl_{\infty}$.  (We refer the reader to
\cite{Hirschfeld} for properties of projective spaces and their
automorphism groups.)  This group $\group$ acts transitively on the
points of $\hyperpl_{\infty}$ and regularly on the points of $PG(n,q)
\setminus (\hyperpl_{\infty} \cup \{\infty\})$.  We will call the
points (and spaces) not contained in $\hyperpl_{\infty}$ the affine
points (and spaces).

The point orbits of $\group$ are
$\{ \infty \}$,
$\hyperpl_{\infty}$, and
$PG(n,q) \setminus (\hyperpl_{\infty} \cup \{\infty\})$.
Dually, the hyperplane orbits are 
$\hyperpl_{\infty}$, the set of all hyperplanes through $\infty$, and 
the set of all hyperplanes of 
$PG(n,q) \setminus \hyperpl_{\infty}$ not containing $\infty$.
Line orbits under $\group$ are:

\begin{enumerate} [label=(\Alph*)] 
\item \label{thruinfty} One orbit of affine lines through $\infty$ -
  this orbit has length $\frac{q^n - 1}{q-1}$; and
\item \label{notthruinfty} $\frac{q^{n-1} - 1}{q-1}$ orbits of affine
  lines not through $\infty$ - each orbit has length $q^n-1$, and
  $\group$ acts regularly on each orbit; and
\item \label{infty} One orbit of lines contained in $\hyperpl_{\infty}$.
\end{enumerate}

A set of parallel (affine) lines through a point 
$P_{\infty} \in \hyperpl_{\infty}$ 
consists of one line $L_0$ from the orbit of type
\ref{thruinfty} and $q-1$ lines from each of the $(q^{n-1}-1)/(q-1)$
orbits of type \ref{notthruinfty}.  We will write this set of
$q^{n-1}$ lines ${\cal{P}} = \{L_0, L_1, \ldots, L_{q^{n-1}-1}\}$ as
follows (See Figure \ref{pic:parallel}):
\begin{itemize}
\item $L_0$, a line through $\infty$ and $P_{\infty} \in
\hyperpl_{\infty}$;
\item $\orb_i = \{ L_{(i-1)(q-1)+1}, L_{(i-1)(q-1)+2}, \ldots,
  L_{(i-1)(q-1)+(q-1)} \}$, $i = 1, \ldots, \frac{q^{n-1}-1}{q-1},$
  each $\orb_i$ belonging to a different orbit under $\group$.
\end{itemize}

\begin{figure} \caption{The parallel class $\packing$.}
\label{pic:parallel}
\begin{center}
\begin{tikzpicture}[scale=1]
\draw (5,8) ellipse (4cm and 1cm);

\draw  (1,2) -- (5,8);
\draw  (3,2) -- (5,8);
\draw  (4,2) -- (5,8);
\draw  (5,2) -- (5,8);
\draw  (7,2) -- (5,8);
\draw  (8,2) -- (5,8);
\draw  (9,2) -- (5,8);

\fill[black] (2,3.5) circle (0.1cm);
\fill[black] (5,8) circle (0.1cm);
\node at (1.6,3.5) {$\infty$};
\node at (5,8.4) {$P_{\infty}$};
\node at (5.8,3.5) {$\ldots$};
\node at (8,9) {$\hyperpl_{\infty}$};

\node at (1,1.5) {$L_0$};
\node at (4,1.5) {$ \underbrace{\hspace{2cm}} $};
\node at (4,1) {$\orb_1$};
\node at (8,1.5) {$ \underbrace{\hspace{2cm}} $};
\node at (8,1) {$\orb_{\frac{q^{n-1}-1}{q-1}}$};

\end{tikzpicture}
\end{center}
\end{figure}

We consider the two types of $d \in \Z_{q^n-1}^*$ depending on the
action of $\tau^d$ on $L_0$:

\begin{enumerate}[label=(\Roman*)] 
\item \label{fixL0} There are $q-2$ values of $\tau^d$, $d \in \Z_{q^n-1}^*$,
  fixing the line $L_0$ (and the points $P_{\infty}$ and $\infty$) and
  permuting the points of $L_0$.  These $\tau^d$ permute but do not
  fix the lines within each $\orb_i$.  Hence we have, for these $d \in
  \Z_{q^n-1}^*$,
\begin{eqnarray}
\label{eqnarray:1}
L_0^{\tau^d} \cap L_0 = L_0 \mbox{ and } L_i^{\tau^d} \cap L_i = \{P_\infty\}.
\end{eqnarray}

\item \label{notfixL0}
The remaining $(q^n-1) - (q-2)$ values of $\tau^d$ map lines in $\cal{P}$ to
affine lines not in $\cal{P}$.  Hence we have
\begin{eqnarray}
\label{eqnarray:2}
L_0^{\tau^d} \cap L_0 = \{\infty\} \mbox{ and } 
|L_i^{\tau^d} \cap L_i| = 0 \mbox{ or } 1.
\end{eqnarray}

Without loss of generality consider $L_1 \in \orb_1$.  Suppose
$|L_1^{\tau^d} \cap L_1| = 1$, say $L_1^{\tau^d} \cap L_1 = \{P\}$.
Let $L_k \in \orb_1$ be another line in the same orbit as $L_1$, so
there is a $d_k$ such that $L_1^{\tau^{d_k}} = L_k$.  It is not hard to see
that $\{P^{\tau^{d_k}}\} =  L_k^{\tau^d} \cap L_k$, since
\begin{eqnarray*}
P \in L_1 &\Rightarrow& P^{\tau^{d_k}} \in L_1^{\tau^{d_k}} = L_k, \\
P \in L_1^{\tau^d} &\Rightarrow& P^{\tau^{d_k}} \in (L_1^{\tau^d})^{\tau^{d_k}}
= (L_1^{\tau^{d_k}})^{\tau^{d}} = L_k^{\tau^{d}}.
\end{eqnarray*}
Hence for any orbit $\orb_i$,
if $|L_j^{\tau^d} \cap L_j| = 1$ for some $L_j \in \orb_i$ 
then $|L_k^{\tau^d} \cap L_k|=1$ for all $L_k \in \orb_i$

Now, suppose again that $|L_1^{\tau^d} \cap L_1| = 1$.  Let $P_1$ be
the point on $L_1$ such that $P_1^{\tau^d} \in L_1^{\tau^d} \cap L_1$.
Consider $L_j \in \orb_k$, $k\neq 1$.  Suppose $|L_j^{\tau^d} \cap
L_j| = 1$.  Let $P_2$ be the point on $L_j$ such that $P_2^{\tau^d}
\in L_j^{\tau^d} \cap L_j$.  (See Figure \ref{pic:difforb}.)  Since
$\group$ is transitive on affine points (excluding $\infty$), there is
a $d_j$ such that $P_1^{\tau^{d_j}} = P_2$.  Then
$$ (P_1^{\tau^d})^{\tau^{d_j}} = (P_1^{\tau^{d_j}})^{\tau^{d}} =
P_2^{\tau^d}.$$ 
This means that $\tau^{d_j}$ maps $P_1$ to $P_2$ and
$P_1^{\tau^d}$ to $P_2^{\tau^d}$ and hence maps the line $L_1$ to
$L_j$.  But this is a contradiction since $L_1$ and $L_j$ belong to
different orbits under $\group$.  Hence if $|L_j^{\tau^d} \cap L_j| = 1$
for any $L_j$ in some orbit $\orb_i$ then $|L_k^{\tau^d} \cap L_k| = 0$ for all 
$L_k$ in all other orbits.

\begin{figure} \caption{$L_1$, $L_j$ in different orbits}
\label{pic:difforb}
\begin{center}
\begin{tikzpicture}[scale=1]
\draw (5,8) ellipse (4cm and 1cm);
\node at (8,9) {$\hyperpl_{\infty}$};

\draw  (3,3) -- (3,8); 
\node at (3,2.5) {$L_1$};
\draw  (1,5) -- (7,8); 
\node at (1,4.5) {$L_1^{\tau^d}$};
\draw  (8,3) -- (3,8); 
\node at (8.5,2.5) {$L_j$};
\draw  (5,2) -- (7,8); 
\node at (4.8,1.5) {$L_j^{\tau^d}$};

\fill[black] (3,8) circle (0.1cm); 
\node at (3,8.4) {$P_{\infty}$};
\fill[black] (7,8) circle (0.1cm); 
\node at (7,8.4) {$P_{\infty}^{\tau^d}$};
\fill[black] (3,4) circle (0.1cm); 
\node at (3.5,4) {$P_1$};
\fill[black] (3,6) circle (0.1cm); 
\node at (3.5,6) {$P_1^{\tau^d}$};
\fill[black] (7,4) circle (0.1cm); 
\node at (7.5,4) {$P_2$};
\fill[black] (6,5) circle (0.1cm); 
\node at (6.5,5) {$P_2^{\tau^d}$};

\draw [->] (3,4) to [out=60,in=300] (3.1,5.9);
\node at (3.6,5) {$\tau^d$};
\draw [->] (7,4) to [out=180,in=270] (6,4.9);
\node at (6,4.2) {$\tau^d$};
\draw [dashed,->] (3,4) to [out=330,in=210] (6.8,3.9);
\node at (4.3,3.2) {$\tau^{d_j}$};
\end{tikzpicture}
\end{center}
\end{figure}

It is also clear that for any $L_i$ in any orbit, there is a $d$ such
that $|L_i^{\tau^d} \cap L_i| = 1$, because $\group$ is transitive on
affine points (excluding $\infty$).  Indeed, $\group$ acts regularly
on these points, so that for any pair of points $(P,Q)$ on $L_i$ there
is a unique $d$ such that $P^{\tau^d} = Q$.  There are $q(q-1)$ pairs
of points and so there are $q(q-1)$ such values of $d$.  
These $q(q-1)$ values of $d$ for each $\orb_i$ in $\cal{P}$, 
together with the $q-2$ values of $d$ where $\tau^d$ that fixes $L_0$, 
account for all of $\Z_{q^n-1}^*$.
\end{enumerate}

Now, the points of $PG(n,q) \setminus (\hyperpl_{\infty} \cup \{\infty\})$
can be represented as $\Z_{q^n-1}$ as follows: pick an arbitrary point
$P_0$ to be designated $0$.  The point $P_0^{\tau^i}$ corresponds to
$i \in \Z_{q^n-1}$.  The action of $\tau^d$ on any point $P$ is thus
represented as $P+d$.  Affine lines are therefore $q$-subsets
of $\Z_{q^n-1}$.  
Let $Q_0 \subseteq \Z_{q^n-1}$ contain the points of $L_0 \setminus
\{\infty\}$, and let $Q_i$ contain the points of $L_i$.  It follows
from the intersection properties of the lines (properties (\ref{eqnarray:1}),
(\ref{eqnarray:2})) that $\{Q_0, \ldots, Q_{q^{n-1} -1} \}$ forms a
perfect partition type $\df(q^{n}-1; q-1,q, \ldots, q)$ over
$\Z_{q^n-1}$, with $|\intl(d)| = q-1$ for all $d \in \Z_{q^n-1}^*$.

\subsection{A perfect external $\df$}

Given that a partition type perfect internal $\df$ over
$\Z_v$ with $|\intl(d)| = \lambda$ must be a perfect external
$\df$ with $|\extl(d)| = v-\lambda$, the intersection
properties $|L_i^{\tau^d} \cap L_j|$, $i \neq j$ can be deduced as
follows for the two different types \ref{fixL0}, \ref{notfixL0} of $d$:

\begin{enumerate} [label = (\Roman*)]
\item For the $q-2$ values of $\tau^d$ of type \ref{fixL0} fixing $L_0$,  we have:
\begin{enumerate} [label = (\alph*)]
\item  $L_0$ is fixed, so $|L_0^{\tau^d} \cap L_i| = 0$ for all $L_i \neq L_0$.
\item  If $L_i$ and $L_j$ are in different orbits then  
$|L_i^{\tau^d} \cap L_j| = 0$ (since $\tau^d$ fixes $\orb_i$).
\item If $L_i$ and $L_j$ are in the same orbit, then since $\tau^d$
  acts regularly on an orbit of type \ref{notthruinfty}, there is a
  unique $d$ that maps $L_i$ to $L_j$, so $|L_i^{\tau^d} \cap L_j| =
  q$, and for all other $L_k$ in the same orbit, $|L_i^{\tau^d} \cap
  L_k| = 0$.  This applies to each orbit, so that for each of the
  $q-2$ values of $d$, there are $((q^{n-1}-1)/(q-1)) \times (q-1) =
  q^{n-1}-1$ cases where $|L_i^{\tau^d} \cap L_j| = q$.
\end{enumerate}
\item For the $(q^n-1)-(q-2)$ values of $\tau^d$ of type \ref{notfixL0} not 
fixing $L_0$, we have:
\begin{enumerate}[label = (\alph*)]
\item Pick any point $P \in L_0 \setminus \{\infty\}$, $P^{\tau^d} \in
  L_i$ for some $L_i \neq L_0$, so $|L_0^{\tau^d} \cap L_i| = 1$ for
  some $L_i$. There are $q-1$ points on $L_0 \setminus \{\infty\}$, so
  there are $q-1$ lines $L_i$ such that $|L_0^{\tau^d} \cap L_i| = 1$.
\item Consider $L_i \neq L_0$.  Take any point $P \in L_i$.  We have
  $P^{\tau^d} \in L_j$ for some $L_j$, so $|L_i^{\tau^d} \cap L_j| =
  1$.  This applies for all $L_i$, so that for any of the
  $(q^n-1)-(q-2)$ values of $d$, there are $(q^{n-1}-1)q$ cases of
  $|L_i^{\tau^d} \cap L_j| = 1$, $q-1$ of which are when $L_j = L_i$.
\end{enumerate}
\end{enumerate}

Defining the sets $Q_0, \ldots, Q_{q^-1}$ as before, we see that 
$\{Q_0, \ldots, Q_{q^-1}\}$ forms a perfect partition type $\df$
with $\extl(d) = q(q^{n-1}-1)$.

\section{A correspondence between two difference families}
\label{sec:corr}

In \cite{FHS}, Fuji-Hara {\it et al.}\ constructed the perfect
partition type $\df(q^{n}-1; q-1,q, \ldots, q)$ over
$\Z_{q^n-1}$ with $|\intl(d)| = q-1$ described in Section
\ref{sec:geometric}.  Using parallel $t$-dimensional subspaces (we
described the case when $t=1$), perfect partition type 
$\df(q^n-1; q^t-1, q^t, \ldots, q^t)$ with $|\intl(d)| =
q^t-1$ can also be constructed.

This construction gives $\df$ with the same parameters as those
constructed using m-sequences in \cite{lempelgreenberger}, though
\cite{lempelgreenberger} restricted their constructions to the case
when $q$ is a prime.  It was asked in \cite{FHS} whether these are
``essentially the same'' constructions.  In this section we show a
correspondence between these two constructions, and in Section
\ref{sec:equiv} we discuss what ``essentially the same'' might mean.
This correspondence also shows that the restriction to $q$ prime in
\cite{lempelgreenberger} is unnecessary.  (Indeed it was pointed out
in \cite{SarwateCDMA} that the assumption that the field must be prime
is not necessary.)

\subsection{The Lempel-Greenberger m-sequence construction}
\label{sub:LG}

We refer the reader to \cite{lidl} for more details on linear
recurring sequences.  Here we sketch an introduction.  Let
$(s_t)=s_0s_1s_2\ldots$ be a sequence of elements in $\gf{q}$, $q$ a
prime power, satisfying the $n^{\rm th}$ order linear recurrence relation
\begin{eqnarray*}
s_{t+n} = c_{n-1}s_{t+n-1} + c_{n-2}s_{t+n-2}+ \cdots + c_0s_t,
\; c_i \in \gf{q}, \; c_{n-1} \neq 0.
\end{eqnarray*}
Then $(s_t)$
is called an ($n^{\rm th}$ order) linearly recurring sequence in $\gf{q}$.
Such a sequence can be generated using
a \emph{linear feedback shift register (LFSR)}.  An LFSR is a device 
with $n$ \emph{stages}, which we denote by $S_0, \ldots ,S_{n-1}$.
Each stage is capable of storing one element of $\gf{q}$.
The contents $s_{t+i}$ of all the registers $S_i$ ($0 \le i \le n-1$)
at a particular time $t$ are known as
the \emph{state} of the LFSR at time $t$.
We will write it either as  
$s(t,n)=s_{t}s_{t+1} \ldots s_{t+n-1}$ or as a vector 
$\s_t = (s_t, s_{t+1}, \ldots, s_{t+n-1})$.  The state 
$\s_0 =(s_0,s_1, \ldots, s_{n-1})$ is the \emph{initial state}.

At each clock cycle, an output from the LFSR is extracted and the 
LFSR is updated as described below.
\begin{itemize}
\item
The content $s_t$ of the stage $S_0$ is output and forms 
part of the \emph{output sequence}.
\item
For all other stages,
the content $s_{t+i}$ of stage $S_i$ is moved to stage $S_{i-1}$
($1 \le i \le n-1$).
\item
The new content $s_{t+n}$ of stage $S_{n-1}$ is the 
value of the \emph{feedback function}
\begin{eqnarray*}
f(s_t,s_{t+1},\ldots,s_{t+n-1}) = 
c_0s_t + c_1s_{t+1} + \cdots c_{n-1}s_{t+n-1},\; c_i \in \gf{q}.
\end{eqnarray*}
The new state is thus $\s_{t+1}=(s_{t+1}, s_{t+2}, \ldots, s_{t+n})$.
The constants $c_0, c_1, \ldots, c_{n-1}$ are known as the
\emph{feedback coefficients} or \emph{taps}.
\end{itemize}

A diagrammatic representation of an LFSR is given in Figure~\ref{fig:LFSR}.

\begin{figure}[ht]
\begin{center}
\unitlength=0.8pt
\begin{picture}(440.00,164.00)(0.00,0.00)
\put(50.00,144.00){\vector(1,0){50.00}}
\put(50.00,104.00){\vector(0,1){40.00}}
\put(50.00,94.00){\circle{20}}
\put(50.00,44.00){\vector(0,1){40.00}}
\put(50.00,94.00){\makebox(0.00,0.00){\footnotesize $c_0$}}
\put(110.00,144.00){\circle{20}}
\put(110.00,144.00){\makebox(0.00,0.00){\Huge +}}
\put(120.00,144.00){\vector(1,0){40.00}}
\put(110.00,104.00){\vector(0,1){30.00}}
\put(110.00,94.00){\circle{20}}
\put(110.00,44.00){\vector(0,1){40.00}}
\put(110.00,94.00){\makebox(0.00,0.00){\footnotesize$c_1$}}
\put(320.00,144.00){\circle{20}}
\put(320.00,144.00){\makebox(0.00,0.00){\Huge +}}
\put(330.00,144.00){\vector(1,0){60.00}}
\put(320.00,104.00){\vector(0,1){30.00}}
\put(320.00,94.00){\circle{20}}
\put(320.00,44.00){\vector(0,1){40.00}}
\put(320.00,94.00){\makebox(0.00,0.00){\footnotesize $c_{n-1}$}}
\put(180.00,144.00){\makebox(0.00,0.00){$\cdots$}}
\put(200.00,144.00){\vector(1,0){20.00}}
\put(230.00,144.00){\circle{20}}
\put(230.00,144.00){\makebox(0.00,0.00){\Huge +}}
\put(210.00,144.00){\vector(1,0){50.00}}
\put(295.00,144.00){\vector(1,0){15.00}}
\put(280.00,144.00){\makebox(0.00,0.00){$\cdots$}}
\put(230.00,104.00){\vector(0,1){30.00}}
\put(230.00,94.00){\circle{20}}
\put(230.00,44.00){\vector(0,1){40.00}}
\put(230.00,94.00){\makebox(0.00,0.00){\footnotesize $c_i$}}
\put(390.00,144.00){\vector(0,-1){100.00}}
\put(390.00,44.00){\vector(-1,0){15.00}}
\put(335.00,44.00){\vector(-1,0){20.00}}
\put(305.00,44.00){\makebox(0.00,0.00){$\cdots$}}
\put(295.00,44.00){\vector(-1,0){15.00}}
\put(200.00,44.00){\makebox(0.00,0.00){$\cdots$}}
\put(240.00,44.00){\vector(-1,0){20.00}}
\put(180.00,44.00){\vector(-1,0){20.00}}
\put(120.00,44.00){\vector(-1,0){20.00}}
\put(60.00,44.00){\vector(-1,0){60.00}}
\put(335.00,24.00){\framebox(40.00,40.00){$s_{t+n-1}$}}
\put(240.00,24.00){\framebox(40.00,40.00){$s_{t+i}$}}
\put(120.00,24.00){\framebox(40.00,40.00){$s_{t+1}$}}
\put(60.00,24.00){\framebox(40.00,40.00){$s_t$}}
\put(335.00, 50.00){\makebox(40.00,40.00){$S_{n-1}$}}
\put(240.00,50.00){\makebox(40.00,40.00){$S_i$}}
\put(120.00,50.00){\makebox(40.00,40.00){$S_1$}}
\put(60.00,50.00){\makebox(40.00,40.00){$S_0$}}
\end{picture}
\end{center}
\caption{Linear Feedback Shift Register}
\label{fig:LFSR}
\end{figure}

The \emph{characteristic polynomial} associated with the LFSR (and the
linear recurrence relation) is
\begin{eqnarray*} 
f(x) = x^n - c_{n-1} x^{n-1} - c_{n-2} x^{n-2} - \cdots - c_0.  
\end{eqnarray*}

The state at time $t+1$ is also given by $\s_{t+1}= \s_t C$, where $C$ is the
\emph{state update matrix} given by

\begin{eqnarray*}
C = \left( \begin{array}{ccccc}
0 & 0 & \ldots & 0 & c_0 \\
1 & 0 & \ldots & 0 & c_1 \\
0 & 1 & \ldots & 0 & c_2 \\
\vdots & \vdots & \ddots & \vdots & \vdots \\
0 & 0 & \ldots & 1 & c_{n-1}
\end{array} \right) .
\end{eqnarray*}

A sequence $(s_t)$ generated by an $n$-stage LFSR is periodic and has
maximum period $q^n-1$.  A sequence that has maximum period is
referred to as an \emph{m-sequence}.  An LFSR generates an m-sequence
if and only if its characteristic polynomial is primitive.  An
m-sequence contains all possible non-zero states of length $n$, hence
we may use, without loss of generality, the {\em impulse response sequence}
(the sequence generated using initial state $(0\cdots01)$).

Let $S = (s_t) = s_0s_1s_2 \ldots$ be an m-sequence over a prime field
$\gf{p}$ generated by an $n$-stage LFSR with a primitive
characteristic polynomial $f(x)$.  Let $s(t,k) = s_t s_{t+1} \ldots
s_{t+k-1}$ be a subsequence of length $k$ starting from $s_t$.

The $\sigma_k$-transformations, $1 \le k \le n-1$ introduced in
\cite{lempelgreenberger} are described as follows:
\begin{eqnarray*}
\sigma_k : s(t,k) = s_t s_{t+1} \ldots s_{t+k-1} \to 
\sum_{i=0}^{k-1} s_{t+i}p^i \in \Z_{p^k} = \{ 0, 1, \ldots, p^k - 1\}.
\end{eqnarray*}

We write the $\sigma_k$-transform of $S$ as $U = (u_t)$, 
$u_t = \sigma_k( s(t,k) )$, which is a sequence over $\Z_{p^k}$.

In \cite[Theorem 1]{lempelgreenberger} it is shown that the sequence
$U$ forms a frequency hopping sequence with out-of-phase
auto-correlation value of $p^{n-k}-1$, and hence a partition type
perfect $\df$ with $|\intl(d)| = p^{n-k}-1$
(Section \ref{sub:FHS}).  We see in the next section that this
corresponds to the geometric construction of \cite{FHS} described in
Section \ref{sec:geometric}.

\subsection{A geometric view of the Lempel-Greenberger m-sequence construction.}
\label{sub:LG-FMM}

We refer the reader to \cite{Hirschfeld} for details about coordinates in finite
projective spaces over $\gf{q}$.  Here we only sketch what is
necessary to describe the m-sequence construction of Section
\ref{sub:LG} from the projective geometry point of view.  

Let $PG(n,q)$ be an $n$-dimensional projective space over $\gf{q}$.
Then we may write
\begin{eqnarray*} 
PG(n,q) = \{ (x_0, x_1, \ldots, x_n) \; | \; x_i \in \gf{q} 
\mbox{ not all zero}\},  
\end{eqnarray*}
with the proviso that $\rho(x_0, x_1, \ldots, x_n)$ as $\rho$ ranges over $\gf{q}
\setminus \{0\}$ all refer to the same point.  
Dually a hyperplane of $PG(n,q)$ is written as $[a_0, a_1, \ldots,
  a_n]$, $a_i \in \gf{q}$ not all zero, and contains the points $(x_0,
x_1, \ldots, x_n)$ satisfying the equation
\begin{eqnarray*}
 a_0x_0 + a_1x_1 + \cdots + a_nx_n = 0.
\end{eqnarray*}
Clearly $\rho[a_0, a_1, \ldots, a_n]$ as $\rho$ ranges over $\gf{q} \setminus
\{0\}$ refers to the same hyperplane.  A $k$-dimensional subspace is
specified by either the points contained in it, or the equations of
the $n-k$ hyperplanes containing it.

Now, let $S = (s_t) = s_0s_1s_2 \ldots$ be an m-sequence over
$\gf{p}$, $p$ prime, generated by an $n$-stage LFSR with a primitive
characteristic polynomial $f(x)$ and state update matrix $C$, as
described in the previous section.  For $t=0, \ldots, p^n-2$, let $P_t
= (s_t, s_{t+1}, \ldots, s_{t+n-1}, 1)$. Then the set $\orb=\{P_t \; |
\; t=0, \ldots p^n-2 \}$ are the points of $PG(n,p) \setminus (
\hyperpl_{\infty} \cup \{ \infty \})$ where $\hyperpl_{\infty}$ is the
hyperplane $x_{n}=0$ and $\infty$ is the point $(0, \ldots, 0, 1)$.

Let $\tau$ be the projectivity defined by 
\begin{eqnarray*}
A = \left( \begin{array}{cccc}
  &  &  &  0 \\
  & C & &  \vdots \\
  &  &  &  0 \\
0 & \ldots & 0 & 1
\end{array} \right).
\end{eqnarray*}

Then $\tau$ fixes $\hyperpl_{\infty}$ and $\infty$, acts regularly on
$\orb=\{P_t \; | \; t=0, \ldots, p^n-2 \}$, and maps $P_t$ to $P_{t+1}$.
Now we consider what a $\sigma_k$-transformation means in $PG(n,p)$.

Firstly we consider $\sigma_{n-1}$.  This takes the first $n-1$ coordinates 
of the point $P_t = (s_t, s_{t+1}, \ldots, s_{t+n-1}, 1)$ and maps them to 
$\sum_{i=0}^{n-2} s_{t+i} p^i \in \Z_{p^{n-1}}$.  There are $p^{n-1}$ 
distinct $z_i \in \Z_{p^{n-1}}$ and for each $z_i \neq 0$, there are $p$
points $Z_i=\{P_{t_0}, \ldots, P_{t_{p-1}} \} = 
\{(s_t, s_{t+1}, \ldots, s_{t+n-2}, \alpha, 1) \; | \; \alpha \in \gf{p}\}$ 
which are mapped to $z_i$ by 
$\sigma_{n-1}$.  For $z_i=0$ there are $p-1$ corresponding points in $Z_0$ 
since the all-zero state does not occur in an m-sequence.

It is not hard to see that the sets $Z_0 \cup \{\infty\}$, $Z_1,
\ldots Z_{p^{n-1}-1}$ form the set of parallel (affine) lines through
the point $(0,\ldots, 0,1,0) \in \hyperpl_{\infty}$, since $Z_i$ is the 
set $\{(s_t, \ldots, s_{t+n-2}, \alpha , 1)\;|\; \alpha \in \gf{p} \}$ for
some $(n-1)$-tuple $(s_t, \ldots, s_{t+n-2})$ and this forms a line
with $(0,\ldots, 0,1,0) \in \hyperpl_{\infty}$ (the line defined by
the $n-1$ hyperplanes $x_0 - s_t x_n=0$, $x_1 - s_{t+1} x_n = 0$,
$\ldots$, $x_{n-2} - s_{t+n-2} x_n = 0$).  This is precisely the
construction given by \cite{FHS} described in Section
\ref{sec:geometric}.  For each $Z_i =\{P_{t_0}, \ldots, P_{t_{p-1}}
\}$, $i=1, \ldots, p^{n-1}-1$, let $D_i = \{t_0, \ldots, t_{p-1} \}$,
and for $Z_0=\{P_{t_0}, \ldots, P_{t_{p-2}}\}$, let $D_0 = \{t_0,
\ldots, t_{p-2} \}$.  Then the sets $D_i$ form a partition type perfect
internal $\df(p^n; p-1, p, \ldots, p)$ over
$\Z_{p^n}$ with $|\intl(d)| = p-1$ for all $d \in \Z_{p^n}^*$.

Similarly, for $\sigma_k$, $1 \le k \le n-1$, the set of points
\begin{eqnarray*}
Z_i =\{ (s_t, \ldots, s_{t+n-k-1}, \alpha_1, \ldots, \alpha_k, 1) \; | \;
\alpha_1, \ldots, \alpha_k \in \gf{p}\}
\end{eqnarray*}
corresponding to each $z_i \in \Z_{p^k}$ form an $(n-k)$-dimensional 
subspace and the set of $Z_i$
forms a parallel class.  These are the constructions of \cite[Lemma 3.1,
3.2]{FHS}.

\begin{example}
Let $S = (s_t)$ be an m-sequence over $\gf{3}$ satisfying the linear recurrence relation $x_{t+3} = 2x_t + x_{t+2}$.  The state update matrix is therefore
\begin{eqnarray*}
 C = \left( \begin{array}{ccc}
  0  & 0 &  2 \\
  1  & 0 &  0  \\
  0  & 1 &  1 
\end{array} \right).
\end{eqnarray*}

The impulse response sequence is $S=(00111021121010022201221202)$,
and the $\sigma_3$-, $\sigma_2$- and $\sigma_1$-transformations give

\begin{center}
\begin{tabular}{cccccc} \hline 
$P_t$ & $s(t,3)$ & $\sigma_3(s(t,3))$ & $s(t,2)$ & $\sigma_2(s(t,2))$ & $s(t,1)=\sigma_1(s(t,1))$ \\ \hline
$P_0$ &   001  &     9    &     00  &     0 & 0 \\
$P_1$ &   011  &     12   &     01  &     3 & 0 \\
$P_2$ &   111   &    13   &      11  &     4 & 1 \\
$P_3$ &   110  &     4     &    11  &     4 & 1 \\
$P_4$ &   102  &     19   &     10  &     1 & 1 \\
$P_5$ &   021  &     15   &     02  &     6 & 0 \\
$P_6$ &   211  &     14   &     21  &     5 & 2 \\
$P_7$ &   112  &     22   &     11  &     4 & 1 \\
$P_8$ &   121  &     16   &     12  &     7 & 1 \\
$P_9$ &   210  &     5    &     21  &     5 & 2 \\
$P_{10}$ &  101  &     10    &    10   &    1 & 1 \\
$P_{11}$ &  010  &     3    &     01  &     3 & 0 \\
$P_{12}$ &  100   &    1    &     10   &    1 & 1 \\
$P_{13}$ &  002  &     18   &     00   &    0 & 0 \\
$P_{14}$ &  022  &     24   &     02   &    6 & 0 \\
$P_{15}$ &  222  &     26   &     22   &    8 & 2 \\
$P_{16}$ &  220  &     8    &     22    &   8 & 2 \\
$P_{17}$ &  201  &     11   &     20    &   2 & 2 \\
$P_{18}$ &  012  &     21   &     01    &   3 & 0 \\
$P_{19}$ &  122   &    25    &    12    &   7 & 1 \\
$P_{20}$ &  221   &    17   &     22    &   8 & 2 \\
$P_{21}$ &  212   &    23   &     21    &   5 & 2 \\
$P_{22}$ &  120   &    7    &     12    &   7 & 1 \\
$P_{23}$ &  202  &     20    &    20    &   2 & 2 \\
$P_{24}$ &  020  &     6    &     02    &   6 & 0 \\
$P_{25}$ &  200  &     2    &     20    &   2 & 2
\end{tabular}
\end{center}

Writing this in $PG(3,3)$, $P_t = (s_t, s_{t+1}, s_{t+2}, 1)$, and 
$\hyperpl_{\infty}$ is the hyperplane $x_3=0$, and $\infty$ is the point
$(0,0,0,1)$.  The projectivity $\tau$ maps $P_t$ to $P_{t+1}$, where
$\tau$ is represented by the matrix $A$,
\begin{eqnarray*}
A = \left( \begin{array}{cccc}
  0  & 0 &  2 & 0 \\
  1  & 0 &  0 & 0 \\
  0  & 1 &  1 & 0 \\
  0  & 0 & 0 & 1
\end{array} \right).
\end{eqnarray*}

The $\sigma_2$ transformation maps 3 points to every $z_i \in \Z_9^*$.
These form the affine lines of $PG(3,3)$ through the point
$(0,0,1,0)$.  For example, the points $P_{1}, P_{11}, P_{18}$ lie on
the line defined by $x_0=0$, $x_1-x_3=0$.  The set $\{1, 11, 18\}$
would be one of the subsets of the difference family.  This gives
$Q_0=\{0,13\}$, $Q_1=\{1, 11, 18\}$, $Q_2=\{5,14,24\}$,
$Q_3=\{4,10,12\}$, $Q_4=\{2,3,7\}$, $Q_5 = \{8, 19, 22\}$, $Q_6 =
\{17, 23, 25\}$, $Q_7=\{6,9,21\}$, $Q_8=\{15, 16, 20\}$.

The $\sigma_1$ transformation maps 9 points to every $z_i \in \Z_3^*$.
These form the affine planes of $PG(3,3)$ through the point
$(0,0,1,0)$. For example, the points $P_2$, $P_3$, $P_4$, $P_7$, 
$P_8$, $P_{10}$,
$P_{12}$, $P_{19}$, $P_{22}$ lie on the plane $x_0-x_3=0$.  The sets
\begin{eqnarray*}
Q_0&=&\{0,1,5,11,13,14,18,24\},\\ 
Q_1&=& \{2,3,4,7,8,10,12,19,22\},\\
Q_2&=&\{6,9,15,16,17,20,21,23,25\}
\end{eqnarray*}
 form a difference family over $\Z_3$.

\end{example}

\subsection{The other way round?}
\label{sub:otherway}

We see that the m-sequence constructions of \cite{lempelgreenberger}
gives the projective geometry constructions of \cite{FHS}.  Here we
consider how the constructions of \cite{FHS} relate to m-sequences.

In $PG(n,q)$ we may choose any $n+2$ points (every set of $n+1$ of
which are independent) as the \emph{simplex of reference} (there is an
automorphism that maps any set of such $n+2$ points to any other set).
Hence we may choose the hyperplane $x_n=0$ (denoted $\hyperpl_{\infty}$)
and the point $(0,0,\ldots, 0,1)$ (denoted $\infty$).

Now, consider a projectivity $\tau$ represented by an $(n+1) \times (n+1)$
matrix $A$ that fixes $\hyperpl_{\infty}$ and $\infty$.  It must take
the form
\begin{eqnarray*}
A = \left( \begin{array}{cccc}
  &  &  &  0 \\
  & C & &  \vdots \\
  &  &  &  0 \\
0 & \ldots & 0 & 1
\end{array} \right),
\end{eqnarray*}
and we see that 
\begin{eqnarray*}
A^i = \left( \begin{array}{cccc}
  &  &  &  0 \\
  & C^i & &  \vdots \\
  &  &  &  0 \\
0 & \ldots & 0 & 1
\end{array} \right). 
\end{eqnarray*}
So the order of $A$ is given by the order of $C$. Let the
characteristic polynomial of $C$ be $f(x)$.  The order of $A$ is hence
the order of $f(x)$.

Consider the action of $\group$ on the points of $PG(n,q) \setminus
(\hyperpl_{\infty} \cup \infty)$. For $\group$ to act transitively on these
points $A$ must have order $q^n-1$, which means that $f(x)$ must be
primitive.  If we use this $f(x)$ as the characteristic polynomial for
an LFSR we generate an m-sequence, as in Section \ref{sub:LG}. For
prime fields, this is precisely the construction of
\cite{lempelgreenberger}.

Projectivities in the same conjugacy classes have matrices that are
similar and therefore have the same characteristic polynomial.  There
are $\frac{\phi(q^n-1)}{n}$ primitive polynomials of degree $n$ over
$\gf{q}$ and this gives the number of conjugacy classes of
projectivities fixing $\hyperpl_{\infty}$ and $\infty$ and acting
transitively on the points of $PG(n,q) \setminus (\hyperpl_{\infty}
\cup \infty)$.

For a particular $\group$ with characteristic polynomial $f(x)$ and
difference family $\{Q_0, \ldots, Q_{q^{n-1}-1} \}$, there are $q^n-1$
choices for the point $P_0$ to be designated $0$ in the construction
described in Section \ref{sec:geometric}.  Each choice gives $Q_i + d$
for each $Q_i$, $i = 1, \ldots, q^{n-1}-1$, $d \in \Z_{q^n-1}^*$.
This corresponds to the $q^n-1$ shifts of the m-sequence generated by
the LFSR with characteristic polynomial $f(x)$.  The choice of
parallel class (the point $P_{\infty} \in \hyperpl_{\infty}$) gives
the difference family $\{Q_i + d \; : \; d \in \Z_{q^n-1},i = 1,
\ldots, q^{n-1}-1 \}$.  (There are $q-1$ values of $d$ such that
$\{Q_i + d\} = \{Q_i\}$.)  This corresponds to a permutation of symbols
and a shift of the m-sequence.  If the set of shifts of an m-sequence
is considered as a cyclic code over $\gf{q}$ then this gives
equivalent codes (more on this in Section \ref{sec:equiv}).  The
group $\group$ has $\phi(q^n-1)$ generators, and each of the generators
$\tau^i$, $(i, q^n-1)=1$ corresponds to a multiplier $w$ such that
$\{w Q_i \; : \; i = 1, \ldots, q^{n-1}-1 \} = \{Q_i \; : \; i = 1,
\ldots, q^{n-1}-1 \}$.
  
We have described this correspondence in terms of the lines of
$PG(n,q)$ but this also applies to the correspondence between higher
dimensional subspaces and the $\sigma_k$-transformations.

\begin{example}
In $PG(3,3)$, the group of perspectivities generated by $\tau$,
represented by the matrix
$$ A = \left( \begin{array}{cccc} 0 & 1 & 0 & 0 \\ 1 & 0 & 1 & 0
  \\ 0 & 1 & 1 & 0 \\ 0 & 0 & 0 & 1 \end{array} \right),$$ fixes the
plane $x_3=0$ and fixes the point $\infty=(0,0,0,1)$.  An affine point
$(x,y,z,1)$ is mapped to the point $(y,x+z,y+z,1)$ and a plane
$[a,b,c,d]$ is mapped to the plane $[a+b-c,a,-a+c,d]$.  Taking the
point $(1,0,0,1)$ as $0$, we have the affine lines through
$P_{\infty}=(1,0,0,0)$ in $x_3=0$ as
$$ \begin{array}{lclcl} 
Q_0 = \{ 0, 13\}, & \hspace{1cm} & Q_1 = \{1,19,4\}, & \hspace{1cm} &
Q_2 = \{2,22,23\}, \\
Q_3 = \{ 3,5,12\}, & \hspace{1cm} & Q_4 = \{6,14,17\}, & \hspace{1cm} &
Q_5 = \{7,11,21\}, \\
Q_6 = \{8,24,20\}, & \hspace{1cm} & Q_7 = \{9,10,15\}, & \hspace{1cm} &
Q_8 = \{16,18,25\}.
\end{array}$$
If we consider the action of $\tau^5$, we have
$$ \begin{array}{lcl} 
Q'_0 = \{ 0, 13\} = Q_0 \times 7, 
& \hspace{1cm} & Q'_1 = \{6,9,21\} = Q_3 \times 7, \\
Q'_2 = \{16,20,15\} = Q_4 \times 7, 
& \hspace{1cm} & Q'_3 = \{ 11,1,18\} = Q_7 \times 7, \\  
Q'_4 = \{22,8,19\} = Q_8 \times 7, 
& \hspace{1cm} & Q'_5 = \{17,23,25\} = Q_5 \times 7, \\
Q'_6 = \{12,10,4\} = Q_6 \times 7, 
& \hspace{1cm} &  Q'_7 = \{2,3,7\} = Q_1 \times 7, \\ 
Q'_8 = \{24,14,5\} = Q_2 \times 7. & &
\end{array}$$
If we choose a different parallel class, say, $P'_{\infty}=(0,0,1,0)$, we
will instead have 
$$ \begin{array}{lclcl} 
Q''_0 = \{ 10, 23\}, & \hspace{1cm} & Q''_1 = \{1,24,16\}, & \hspace{1cm} &
Q''_2 = \{2,0,9\}, \\
Q''_3 = \{ 3,14,11\}, & \hspace{1cm} & Q''_4 = \{4,8,18\}, & \hspace{1cm} &
Q''_5 = \{5,17,21\}, \\
Q''_6 = \{6,7,12\}, & \hspace{1cm} & Q''_7 = \{13,15,22\}, & \hspace{1cm} &
Q''_8 = \{19,20,25\},
\end{array}$$
and $\{Q'_0, \ldots, Q'_8\} = \{Q_0+10, \ldots, Q_8+10\}$.

The characteristic polynomial of $A$ is $f(x) = x^3 - x^2 - 2x - 2$.
Using $f(x)$ as the characteristic polynomial of an LFSR we have the
update matrix $C$ as
$$ C = \left( \begin{array}{ccc} 0 & 0 & 2 \\ 1 & 0 & 2 \\ 0 & 1 &
  1 \end{array} \right).$$ 
Using the process described in
Section \ref{sub:LG-FMM}, we obtain (with $(0,0,1,1)$ as $0$) the
difference family $\{Q_i -1\;: \; i=0, \ldots 8\}$. 
\end{example}

It is clear from this correspondence that the m-sequence constructions
of \cite{lempelgreenberger} also works over a non-prime field.  The
$\sigma_k$ transform is essentially assigning a unique symbol to each
$k$-tuple from the initial m-sequence.

\section{Equivalence of FH sequences}
\label{sec:equiv}

In \cite{FHS},  Fuji-Hara \emph{et al.} stated ``Often we are
interested in properties of FH sequences, such as auto-correlation,
randomness and generating method, which remain unchanged when passing
from one FH sequence to another that is essentially the same.
Providing an exact definition for this concept and enumerating how
many non `essentially the same' FH sequences are also interesting
problems deserving of attention.''  Here we discuss the notion of 
equivalence of FH sequences.

Firstly we adopt the notation of \cite{Mwawi} for frequency hopping schemes:
An $(n,M,q)$-frequency hopping scheme (FHS) $\fhs$ is a set of $M$ words of
length $n$ over an alphabet of size $q$.  Each word is an FH sequence.

Elements of the symmetric group $S_n$ can act on $\fhs$ by permuting
the coordinate positions of each word in $\fhs$.  Let $\rho_n$
denote the permutation $\begin{pmatrix}1&2&\cdots&n\end{pmatrix}\in
  S_n$.  We say that an element of $S_n$ is a {\em rotation} if it
  belongs to $\langle \rho_n \rangle$, the subgroup generated by
  $\rho_n$.

\begin{example}
Consider the $(7,1,2)$-FHS ${\fhs}$ consisting of the single word 
$(0,0,0,1,0,1,1)$.  We have $(0,0,0,1,0,1,1)^{\rho_7}=(1,0,0,0,1,0,1)$.
\end{example}

\begin{definition} \label{def:rotational}
Let $Q$ be a finite alphabet.  Given a set $S\subseteq Q^n$ we define the 
{\em rotational closure} of $S$ to be the set 
\begin{eqnarray*}
{\rot{S}}=\{\mathbf{w}^\sigma\mid \mathbf{w}\in S,\ 
\sigma\in\langle\rho_n\rangle\}.
\end{eqnarray*}
If $\rot{S}=S$ then we say that $S$ is {\em rotationally closed}.
\end{definition}

\begin{example}
Consider again the binary $(7,1,2)$-FHS $\fhs$ consisting of the single word
$(0,0,0,1,0,1,1)$.  Its rotational closure is the orbit of the word
$(0,0,0,1,0,1,1)$ under the action by the subgroup $\langle
\rho_7\rangle$:
\begin{eqnarray*}
\rot{\fhs}=\{&(0,0,0,1,0,1,1),\\ &(1,0,0,0,1,0,1),\\ &(1,1,0,0,0,1,0),\\&(0,1,1,0,0,0,1),\\&(1,0,1,1,0,0,0),\\&(0,1,0,1,1,0,0),\\&(0,0,1,0,1,1,0)\}.
\end{eqnarray*}
\end{example}

If $\fhs$ is a FHS then $\rot{\cal F}$ is precisely the set of
sequences available to users for selecting frequencies.  An important
property of a FHS is the Hamming correlation properties of the sequences in
$\fhs$.

Let $\fhs$ be an $(n,M,q)$-FHS and let $\mathbf{x}=(x_0, \ldots, x_{n-1})$, 
$\mathbf{y}=(y_0, \ldots, y_{n-1}) \in \fhs$. The Hamming correlation 
$H_{\mathbf{x},\mathbf{y}}(t)$ at relative time delay $t$, $0 \le t < n$,  
between $\mathbf{x}$
 and $\mathbf{y}$ is
$$ H_{\mathbf{x},\mathbf{y}}(t) = \sum_{i=0}^{n-1} h(x_i, y_{i+t}), $$
where
$$h(x,y) = \left\{ \begin{array}{ll}
1 & \mbox{if } x = y,\\ 
0 & \mbox{if } x \neq y.
\end{array} \right.$$

Note that the operations on indices are performed modulo $n$. If
$\mathbf{x} = \mathbf{y}$ then $H_{\mathbf{x}}(t) =
H_{\mathbf{x},\mathbf{x}}(t)$ is the Hamming auto-correlation.  The
maximum out-of-phase Hamming auto-correlation of $\mathbf{x}$ is
\[ H(\mathbf{x}) = \max_{1 \le t<n } \{H_{\mathbf{x}}(t)\} \]
and the maximum Hamming cross-correlation between any two distinct 
FH sequences $\mathbf{x}$, $\mathbf{y}$ is
\[ H(\mathbf{x},\mathbf{y}) = \max_{0 \le t <n}
\{ H_{\mathbf{x},\mathbf{y}}(t)\}. \]
We define the maximum Hamming correlation of an  $(n,M,q)$-FHS $\fhs$ as 

\[M(\fhs) = \max_{\mathbf{x}, \mathbf{y} \in \fhs} 
\{H(\mathbf{x}),H(\mathbf{y}),H(\mathbf{x},\mathbf{y})\}. \]

\begin{theorem}
Let $\mathbf{w}\in Q^n$.  The maximum out-of-phase Hamming auto-correlation $H(\w)$ of 
$\mathbf{w}$ is equal to $n-d$, where $d$ is the minimum (Hamming) 
distance of $\rot{\mathbf{w}}$.
\end{theorem}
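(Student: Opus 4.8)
The plan is to relate the out-of-phase Hamming autocorrelation $H_{\mathbf{w}}(t)$ directly to the Hamming distance between $\mathbf{w}$ and its rotation $\mathbf{w}^{\rho_n^t}$. First I would observe that for any fixed $t$ with $1 \le t < n$, the quantity $H_{\mathbf{w}}(t) = \sum_{i=0}^{n-1} h(w_i, w_{i+t})$ counts exactly the number of coordinate positions $i$ at which $\mathbf{w}$ and its cyclic shift agree, where indices are read modulo $n$. Since the Hamming distance between two length-$n$ words is $n$ minus the number of agreements, this gives immediately
\[
H_{\mathbf{w}}(t) = n - d_H(\mathbf{w}, \mathbf{w}^{\rho_n^{t}}),
\]
once one checks that $\mathbf{w}^{\rho_n^{t}}$ is the word whose $i$-th coordinate is $w_{i+t}$ (consistent with the convention fixed just before the theorem statement). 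The small bookkeeping point to get right is the direction of the shift in Definition \ref{def:rotational} versus the index convention $y_{i+t}$ appearing in $H_{\mathbf{x},\mathbf{y}}(t)$; these must be reconciled, but since we take a maximum over all $t$ (and $\{\rho_n^t : 1\le t<n\}$ runs over all nontrivial powers regardless of orientation), the final statement is unaffected by this choice.

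Next I would take the maximum over $t$. Taking $\max_{1 \le t < n}$ of both sides of the displayed identity turns the right-hand side into $n - \min_{1 \le t < n} d_H(\mathbf{w}, \mathbf{w}^{\rho_n^{t}})$, so that $H(\mathbf{w}) = n - \min_{1\le t<n} d_H(\mathbf{w}, \mathbf{w}^{\rho_n^t})$. It then remains to identify $\min_{1\le t<n} d_H(\mathbf{w}, \mathbf{w}^{\rho_n^t})$ with $d$, the minimum distance of $\rot{\mathbf{w}}$. By definition, $\rot{\mathbf{w}} = \{\mathbf{w}^{\sigma} : \sigma \in \langle \rho_n\rangle\}$, and its minimum distance is the minimum of $d_H(\mathbf{w}^{\sigma}, \mathbf{w}^{\sigma'})$ over distinct $\sigma, \sigma' \in \langle\rho_n\rangle$. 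Since $\rho_n$ acts on coordinate positions, it preserves Hamming distance: $d_H(\mathbf{w}^{\sigma}, \mathbf{w}^{\sigma'}) = d_H(\mathbf{w}, \mathbf{w}^{\sigma'\sigma^{-1}})$, and as $\sigma, \sigma'$ range over distinct elements of $\langle\rho_n\rangle$, the element $\sigma'\sigma^{-1}$ ranges over all nontrivial powers $\rho_n^{t}$, $1 \le t < n$. Hence the two minima coincide.

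One subtlety deserves a remark: it is conceivable that $\mathbf{w}^{\rho_n^t} = \mathbf{w}$ for some $1 \le t < n$ (when $\mathbf{w}$ has a nontrivial period dividing $n$), in which case that $t$ contributes distance $0$ and forces $d = 0$ and $H(\mathbf{w}) = n$; the identity above still holds in this degenerate case, and $\rot{\mathbf{w}}$ simply has fewer than $n$ elements, with two of its "copies" coinciding — so the minimum distance is genuinely $0$. I expect the main (and only real) obstacle to be purely notational: pinning down the action of $\langle \rho_n\rangle$ on words and on indices so that the shift appearing in the autocorrelation sum matches the shift defining the rotational closure. Once that is fixed, the proof is the short chain of equalities above.
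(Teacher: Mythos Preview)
Your proposal is correct and is precisely the natural argument the paper has in mind; indeed, the paper simply remarks that the proof is trivial and omits it, so your write-up supplies exactly the expected details (the identity $H_{\mathbf{w}}(t)=n-d_H(\mathbf{w},\mathbf{w}^{\rho_n^t})$, followed by taking the maximum/minimum and using that coordinate permutations preserve Hamming distance).
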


\begin{theorem}
Let $\fhs$ be an $(n,M,q)$-FHS.  The minimum distance of 
$\rot{\fhs}$ is equal to $n-M(\fhs)$.
\end{theorem}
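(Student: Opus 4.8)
The plan is to derive this as a multi-word version of the preceding theorem, by combining the isometry properties of rotations with the duality between Hamming distance and Hamming correlation. The key underlying fact is that for a fixed $\sigma\in\langle\rho_n\rangle$ the map $\w\mapsto\w^{\sigma}$ is an isometry of $Q^n$ in the Hamming metric (it merely permutes coordinate positions, and Hamming distance is invariant under permutations of coordinates); since $\langle\rho_n\rangle$ is cyclic, hence abelian, it follows that for any $\w,\w'\in Q^n$ and any $\sigma,\sigma'\in\langle\rho_n\rangle$ we have $d(\w^{\sigma},(\w')^{\sigma'})=d(\w,(\w')^{\rho_n^{t}})$ for a suitable $t\in\Z_n$. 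Thus every pair of elements of $\rot{\fhs}$ has the same distance as a pair $(\w,(\w')^{\rho_n^{t}})$ with $\w,\w'\in\fhs$ and $t\in\Z_n$.

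Next I would unwind $d(\w,(\w')^{\rho_n^{t}})$ straight from the definitions: the number of coordinates in which $\w$ and $(\w')^{\rho_n^{t}}$ disagree equals $n$ minus the number in which they agree, and that latter count is exactly a shifted Hamming correlation value, so $d(\w,(\w')^{\rho_n^{t}})=n-H_{\w,\w'}(-t)$. Letting $t$ run over $\Z_n$ and recalling $H(\w,\w')=\max_t H_{\w,\w'}(t)$, this yields $\min_{t\in\Z_n}d(\w,(\w')^{\rho_n^{t}})=n-H(\w,\w')$ when $\w\neq\w'$; the companion identity $\min_{t:\,\w^{\rho_n^{t}}\neq\w}d(\w,\w^{\rho_n^{t}})=n-H(\w)$ for a single word is precisely the preceding theorem applied to $\w$.

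Finally I would take the minimum over all pairs of distinct elements of $\rot{\fhs}$, splitting according to whether the two underlying base words $\w,\w'\in\fhs$ are equal or not:
\[
d(\rot{\fhs})=\min\Bigl\{\min_{\w\in\fhs}\bigl(n-H(\w)\bigr),\ \min_{\substack{\w,\w'\in\fhs\\ \w\neq\w'}}\bigl(n-H(\w,\w')\bigr)\Bigr\}=n-\max\Bigl\{\max_{\w\in\fhs}H(\w),\ \max_{\substack{\w,\w'\in\fhs\\ \w\neq\w'}}H(\w,\w')\Bigr\},
\]
and the quantity inside the outer maximum is exactly $M(\fhs)$ by definition, so $d(\rot{\fhs})=n-M(\fhs)$.

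The step I expect to be the main obstacle is the bookkeeping in the last display when a rotation of one base word coincides with a rotation of a different base word, i.e. when two distinct words of $\fhs$ are cyclic shifts of each other. Such a coincidence produces a formal pair at distance $0$ that must be dropped, since it is not a pair of distinct elements of the set $\rot{\fhs}$, and one must verify that dropping it leaves the minimum unchanged — equivalently, one works under the mild and natural hypothesis that distinct words of $\fhs$ are rotationally inequivalent, so that $\rot{\fhs}$ decomposes as the disjoint union of the individual $\rot{\w}$, $\w\in\fhs$. Once that point is settled, the remaining ingredients — the isometry fact, the distance/correlation identity, and invoking the single-word theorem for the contribution from $\w=\w'$ — are entirely routine.
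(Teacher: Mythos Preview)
Your argument is correct and is precisely the routine verification the paper has in mind: it declares the proof ``trivial'' and omits it, so there is nothing to compare beyond the straightforward distance--correlation duality you spell out. Your flagging of the edge case where distinct words of $\fhs$ are cyclic shifts of one another is, if anything, more careful than the paper itself, which tacitly assumes this does not occur.
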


The proofs of these theorems are trivial, but the theorems
suggest that taking the rotational closure of a frequency
hopping sequence allows us to work with the standard notion of Hamming
distance in place of the Hamming correlation.

\begin{theorem}
Let $\mathbf{w}\in Q^n$.  If $|\rot{\mathbf{w}}|<n$ then $H(\mathbf{w})=n$.
\end{theorem}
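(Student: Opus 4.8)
The plan is to exploit the orbit--stabiliser theorem for the action of the cyclic group $\langle\rho_n\rangle$, which has order $n$, on the word $\mathbf{w}$. Its orbit under this action is exactly $\rot{\mathbf{w}}$, so the hypothesis $|\rot{\mathbf{w}}|<n$ forces the stabiliser of $\mathbf{w}$ to be nontrivial. Hence there is some $t$ with $1\le t<n$ (indeed one may take $t$ to be the common size of the orbit, a proper divisor of $n$) such that $\mathbf{w}^{\rho_n^t}=\mathbf{w}$; unwinding the definition of the action of $\rho_n$ on coordinate positions, this says precisely that $w_i=w_{i+t}$ for every $i$, with indices read modulo $n$ (so that $\mathbf{w}$ is a concatenation of copies of a block of length $\gcd(t,n)$).

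With such a $t$ in hand, I would simply evaluate the Hamming auto-correlation at relative delay $t$. Since $w_i=w_{i+t}$ for all $i$, every term $h(w_i,w_{i+t})$ equals $1$, so
\[
H_{\mathbf{w}}(t)=\sum_{i=0}^{n-1}h(w_i,w_{i+t})=n.
\]
Because $1\le t<n$, this value occurs among the out-of-phase shifts in the maximum defining $H(\mathbf{w})$, so $H(\mathbf{w})\ge n$. On the other hand $H_{\mathbf{w}}(s)\le n$ for every $s$, being a sum of $n$ terms each at most $1$, so $H(\mathbf{w})\le n$ as well, and equality follows.

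The argument is essentially immediate, in keeping with the earlier remark that the proofs of the preceding theorems are trivial; the only point requiring a moment's care is the translation between ``a nontrivial rotation fixes $\mathbf{w}$'' and ``$w_i=w_{i+t}$ for all $i$'', i.e.\ getting the direction of the coordinate permutation $\rho_n$ right. No finer structure of the alphabet $Q$ or of $\mathbf{w}$ is needed, and the same reasoning could alternatively be phrased via the first of the three theorems above (minimum distance of $\rot{\mathbf{w}}$ is $0$ when $\rot{\mathbf{w}}$ contains two equal words, which happens exactly when $|\rot{\mathbf{w}}|<n$).
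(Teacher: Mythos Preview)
Your proof is correct and follows essentially the same approach as the paper: both invoke the orbit--stabiliser theorem for the action of $\langle\rho_n\rangle$ to obtain a nontrivial rotation fixing $\mathbf{w}$, and conclude that the maximum out-of-phase Hamming auto-correlation is $n$. You have simply spelled out in more detail the final implication (translating the fixed rotation into $w_i=w_{i+t}$ and computing $H_{\mathbf{w}}(t)=n$) that the paper leaves implicit.
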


\begin{proof}
We observe that $|\rot{\mathbf{w}}|$ is the size of the orbit of
$\mathbf{w}$ under the action of the subgroup $\langle\rho_n\rangle$,
which has order $n$.  By the orbit-stabiliser theorem, if
$|\rot{\mathbf{w}}|<n$ then the stabiliser of $\mathbf{w}$ is
nontrivial.  That is, there is some (non-identity) rotation that maps
$\mathbf{w}$ onto itself.  This implies that its maximum out-of-phase
Hamming auto-correlation is $n$.
\end{proof}

In other words, unless a given sequence of length $n$ has worst
possible Hamming auto-correlation, its rotational closure always has
size $n$.

The following lemma is also straightforward to prove:
\begin{lemma}
Let $\mathbf{w}\in Q^n$.  If $|\rot{\mathbf{w}}|<n$ then  for $i=0,1,\dotsc,n-1$ we have $\rot{\mathbf{w}^{\rho_n^i}}=\rot{\mathbf{w}}$.
\end{lemma}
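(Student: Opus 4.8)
The plan is to show that, under the hypothesis $|\rot{\mathbf{w}}| < n$, applying the rotation $\rho_n$ to $\mathbf{w}$ does not enlarge or shrink the rotational closure, so that $\rot{\mathbf{w}^{\rho_n}}$ and $\rot{\mathbf{w}}$ coincide, and then iterate. The key observation is simply that the rotational closure of a word is its orbit under the cyclic group $\langle\rho_n\rangle$, and the orbit of any element of an orbit is that same orbit; so in fact the hypothesis $|\rot{\mathbf{w}}|<n$ is not even needed for the conclusion (it is inherited from the previous lemma's hypothesis for contextual consistency).

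First I would recall Definition \ref{def:rotational}: $\rot{\mathbf{w}} = \{\mathbf{w}^\sigma \mid \sigma \in \langle\rho_n\rangle\}$, which is exactly the $\langle\rho_n\rangle$-orbit of $\mathbf{w}$. Next, I would fix $i \in \{0,1,\dotsc,n-1\}$ and compute $\rot{\mathbf{w}^{\rho_n^i}} = \{(\mathbf{w}^{\rho_n^i})^\sigma \mid \sigma \in \langle\rho_n\rangle\} = \{\mathbf{w}^{\rho_n^i \sigma} \mid \sigma \in \langle\rho_n\rangle\}$, using that the action of $S_n$ on words is a (right) group action so that $(\mathbf{w}^{\rho_n^i})^\sigma = \mathbf{w}^{\rho_n^i\sigma}$. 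Since $\langle\rho_n\rangle$ is a group, the map $\sigma \mapsto \rho_n^i\sigma$ is a bijection of $\langle\rho_n\rangle$ onto itself, so the set $\{\rho_n^i\sigma \mid \sigma\in\langle\rho_n\rangle\}$ equals $\langle\rho_n\rangle$. Therefore $\{\mathbf{w}^{\rho_n^i\sigma} \mid \sigma\in\langle\rho_n\rangle\} = \{\mathbf{w}^{\sigma'} \mid \sigma'\in\langle\rho_n\rangle\} = \rot{\mathbf{w}}$, which is the claim.

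There is essentially no obstacle here — the only mild subtlety is making sure the group action is composed in the correct order (this is why I phrase it as a right action, matching the superscript notation $\mathbf{w}^\sigma$), and being careful that $\rho_n^i$ for $i$ possibly zero still gives the identity, which is fine since the identity lies in $\langle\rho_n\rangle$. The statement of the lemma restricts to $|\rot{\mathbf{w}}|<n$ only because it is invoked in that regime; the argument above in fact proves the conclusion for every $\mathbf{w}\in Q^n$, so I would note this in passing.

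\begin{proof}
By Definition \ref{def:rotational}, $\rot{\mathbf{w}}$ is the orbit of $\mathbf{w}$ under the action of the cyclic group $\langle\rho_n\rangle$. Fix $i\in\{0,1,\dotsc,n-1\}$. Since the action of $S_n$ on words of length $n$ is a group action,
\[
\rot{\mathbf{w}^{\rho_n^i}}
= \{(\mathbf{w}^{\rho_n^i})^{\sigma} \mid \sigma\in\langle\rho_n\rangle\}
= \{\mathbf{w}^{\rho_n^i\sigma} \mid \sigma\in\langle\rho_n\rangle\}.
\]
As $\langle\rho_n\rangle$ is a group, left multiplication by $\rho_n^i$ permutes $\langle\rho_n\rangle$, so $\{\rho_n^i\sigma \mid \sigma\in\langle\rho_n\rangle\} = \langle\rho_n\rangle$. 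Hence
\[
\rot{\mathbf{w}^{\rho_n^i}}
= \{\mathbf{w}^{\sigma'} \mid \sigma'\in\langle\rho_n\rangle\}
= \rot{\mathbf{w}},
\]
as required. (The hypothesis $|\rot{\mathbf{w}}|<n$ was not needed.)
\end{proof}
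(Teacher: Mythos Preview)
Your proof is correct and is exactly the straightforward orbit argument the paper has in mind; the paper does not give an explicit proof, merely remarking that the lemma ``is also straightforward to prove''. Your observation that the hypothesis $|\rot{\mathbf{w}}|<n$ is superfluous is also correct.
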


In coding theory, two codes are equivalent if one can be obtained from
the other by a combination of applying an arbitrary permutation to the
alphabet symbols in a particular coordinate position and/or permuting
the coordinate positions of the codewords.  These are transformations
that preserve the Hamming distance between any two codewords.  In the
case of frequency hopping sequences, it is the maximum Hamming correlation
that we wish to preserve.  This is a stronger condition, and hence the
set of transformations that are permitted in the definition of
equivalence will be smaller.  For example, we can no longer apply
different permutations to the alphabet in different coordinate
positions, as that can alter the out-of-phase Hamming correlations.  Because the
rotation of coordinate positions is inherent to the definition of
Hamming correlation, if we wish to permute the alphabet symbols then
we must apply the same permutation to the symbols in each coordinate
position.  Similarly, not all permutations of coordinates preserve the out-of-phase
Hamming auto-correlation of a sequence.
\begin{example}
Consider the sequence $(0,0,0,1,0,1,1)$.  Its maximum out-of-phase Hamming
auto-correlation is 3.  However, if we swap the first and last column
we obtain the sequence $(1,0,0,1,0,1,0)$, which has maximum out-of-phase Hamming
auto-correlation 5.
\end{example}
However, we can use the notion of rotational closure to determine an
appropriate set of column permutations that will preserve Hamming
correlation.  Recall that for a given word, its out-of-phase Hamming
auto-correlation is uniquely determined by the minimum distance of its
rotational closure.  Now, any permutation of coordinates preserves Hamming
distance, so if we can find a set of permutations that preserve the
property of being rotationally closed, then these will in turn
preserve the out-of-phase Hamming auto-correlation of individual sequences.

Suppose a word $\mathbf{w}$ of length $n$ has $H(\w)<n$.  Then its
rotational closure consists of the elements 
\begin{eqnarray*}\rot{\mathbf{w}}=\{\mathbf{w},
  \mathbf{w}^{\rho_n},\mathbf{w}^{\rho_n^2},\dotsc,\mathbf{w}^{\rho_n^{n-1}}\}.
\end{eqnarray*}
Applying a permutation $\gamma\in S_n$ to the coordinates of these words
gives the set
\begin{eqnarray*}
\left(\rot{\mathbf{w}}\right)^\gamma=\{\mathbf{w}^\gamma, \mathbf{w}^{\rho_n\gamma},\mathbf{w}^{\rho_n^2\gamma},\dotsc,\mathbf{w}^{\rho_n^{n-1}\gamma}\}.
\end{eqnarray*}
We wish to establish conditions on $\gamma$ that ensure that $\left(\rot{\mathbf{w}}\right)^\gamma$ is itself rotationally closed.
\begin{theorem}
Suppose $\mathbf{w}\in Q^n$ has out-of-phase Hamming auto-correlation
less than $n$.  Then $\left(\rot{\mathbf{w}}\right)^\gamma$ is
rotationally closed if and only if $\gamma\in
N_{S_n}(\langle\rho_n\rangle)$, that is $\gamma$ is an element of the
normaliser of $\langle \rho_n\rangle$ in $S_n$.
\end{theorem}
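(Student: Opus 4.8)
The plan is to recast the rotational closure of $(\rot{\w})^\gamma$ as an invariance property of $\rot{\w}$ under a conjugate of $\langle\rho_n\rangle$, and then treat the two implications separately. Since the coordinate action of $S_n$ on $Q^n$ is a right action, $\bigl((\rot{\w})^\gamma\bigr)^\sigma=(\rot{\w})^{\gamma\sigma}$ for every $\sigma\in S_n$; hence $(\rot{\w})^\gamma$ is rotationally closed, i.e.\ fixed setwise by every $\sigma\in\langle\rho_n\rangle$, exactly when $(\rot{\w})^{\gamma\sigma}=(\rot{\w})^\gamma$ for all such $\sigma$, and applying $\gamma^{-1}$ on the right this is equivalent to $(\rot{\w})^{\gamma\sigma\gamma^{-1}}=\rot{\w}$ for all $\sigma\in\langle\rho_n\rangle$. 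In other words, $(\rot{\w})^\gamma$ is rotationally closed if and only if the conjugate subgroup $\gamma\langle\rho_n\rangle\gamma^{-1}$ is contained in the setwise stabiliser $\mathrm{Stab}_{S_n}(\rot{\w})$. Since $\gamma\in N_{S_n}(\langle\rho_n\rangle)$ means exactly $\gamma\langle\rho_n\rangle\gamma^{-1}=\langle\rho_n\rangle$, the theorem reduces to proving that $\gamma\langle\rho_n\rangle\gamma^{-1}\subseteq\mathrm{Stab}_{S_n}(\rot{\w})$ forces $\gamma\langle\rho_n\rangle\gamma^{-1}=\langle\rho_n\rangle$.

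The ``if'' direction is then immediate: when $\gamma$ normalises $\langle\rho_n\rangle$ we have $\gamma\langle\rho_n\rangle\gamma^{-1}=\langle\rho_n\rangle$, and $\rot{\w}$, being a single $\langle\rho_n\rangle$-orbit by Definition~\ref{def:rotational}, is fixed setwise by $\langle\rho_n\rangle$. (Concretely, such a $\gamma$ conjugates the generator $\rho_n$ to a generator $\rho_n^{c}$ with $\gcd(c,n)=1$, so $\rho_n^i\gamma\rho_n=\rho_n^{i+c}\gamma$, and applying $\rho_n$ merely permutes the words $\w^\gamma,\w^{\rho_n\gamma},\dots,\w^{\rho_n^{n-1}\gamma}$ of $(\rot{\w})^\gamma$ by the bijection $i\mapsto i+c$ of $\Z/n$, so the set is rotationally closed.)

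For the ``only if'' direction, assume $(\rot{\w})^\gamma$ is rotationally closed, so $R':=\gamma\langle\rho_n\rangle\gamma^{-1}$ stabilises $\rot{\w}$ setwise, and I must deduce $R'=\langle\rho_n\rangle$; this is exactly where the hypothesis $H(\w)<n$ is used. By the result that $|\rot{\w}|<n$ implies $H(\w)=n$ (in contrapositive form), $\rot{\w}$ has its full size $n$, so $\langle\rho_n\rangle$ acts regularly on the $n$-element set $\rot{\w}$. I would first record that this forces the pointwise stabiliser of $\rot{\w}$ in $S_n$ to be trivial: a permutation fixing every rotation of $\w$ must permute the coordinates within each block of the common refinement of the value-partitions of all the rotations of $\w$, and that refinement consists of singletons precisely because $\w$ is fixed by no nonzero cyclic shift. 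Consequently $\mathrm{Stab}_{S_n}(\rot{\w})$ embeds into the symmetric group on $\rot{\w}$, and $\langle\rho_n\rangle$ maps there to a regular cyclic subgroup of order $n$.

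The crux — and the step I expect to be the main obstacle — is then to pin $R'$ down exactly: we know only that $R'$ lies inside $\mathrm{Stab}_{S_n}(\rot{\w})$ and is conjugate in $S_n$ to $\langle\rho_n\rangle$ (so is generated by an $n$-cycle on the coordinates), and we must conclude $R'=\langle\rho_n\rangle$. I would approach this by tracking how $R'$ moves the base word $\w$ within $\rot{\w}$: since $\w^{R'}\subseteq\rot{\w}=\w\langle\rho_n\rangle$, and using $\mathrm{Stab}_{S_n}(\w^{\rho_n^i})=\rho_n^{-i}\mathrm{Stab}_{S_n}(\w)\rho_n^i$, one gets a first containment $R'\subseteq\mathrm{Stab}_{S_n}(\w)\langle\rho_n\rangle$; feeding in the invariance of $\rot{\w}$ under all of $R'$ one then has to eliminate the ``extra'' elements, show that $R'$ acts regularly on $\rot{\w}$ in the same way $\langle\rho_n\rangle$ does, and thereby force $R'=\langle\rho_n\rangle$. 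This final identification — not merely recognising $R'$ up to abstract isomorphism, but placing it exactly on $\langle\rho_n\rangle$ inside $S_n$ — is the delicate part of the argument, and it is where the structure of $\mathrm{Stab}_{S_n}(\w)$ relative to $\langle\rho_n\rangle$, and hence the full force of the hypothesis $H(\w)<n$, must be brought to bear.
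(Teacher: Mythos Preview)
Your reformulation is sound: $(\rot{\w})^\gamma$ is rotationally closed exactly when $R':=\gamma\langle\rho_n\rangle\gamma^{-1}$ lies inside $G:=\mathrm{Stab}_{S_n}(\rot{\w})$, and the ``if'' direction then follows immediately from $\langle\rho_n\rangle\le G$, just as in the paper. You are also right to flag the converse step --- deducing $R'=\langle\rho_n\rangle$ from $R'\le G$ --- as the crux, and right not to claim you have carried it out.

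The reason you could not complete it is that the step is false: the ``only if'' direction of the theorem does not hold as stated. Take $n=4$, $Q=\{0,1\}$, $\w=(0,1,1,1)$. Then $H(\w)=2<4$, and $\rot{\w}$ is the set of all four binary words containing exactly one~$0$. Any coordinate permutation sends such a word to another, so $G=S_4$. With $\gamma=(2\ 3)$ one computes $\gamma\rho_4\gamma^{-1}=(1\ 3\ 2\ 4)\notin\langle\rho_4\rangle$, so $\gamma\notin N_{S_4}(\langle\rho_4\rangle)$; yet $(\rot{\w})^\gamma=\rot{\w}$ is certainly rotationally closed. Your instinct that the ``full force of $H(\w)<n$'' is needed here is thus misplaced --- that hypothesis guarantees $|\rot{\w}|=n$ and trivial pointwise stabiliser, but it does not bound $G$, which can be all of $S_n$. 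The paper's proof slips at precisely the same point: from $\w^{\gamma\rho_n^{-j}\gamma^{-1}}=\w^{\rho_n^{i}}$ it infers $\gamma\rho_n^{-j}\gamma^{-1}\in\langle\rho_n\rangle$, but this is valid only when $\mathrm{Stab}_{S_n}(\w)$ is trivial, i.e.\ when all entries of $\w$ are distinct.
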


\begin{proof}
Suppose $\gamma\in N_{S_n}(\langle\rho_n\rangle)$ .  Then
$\gamma\langle\rho_n\rangle \gamma^{-1}= \langle \rho_n\rangle$.  This
implies that
\begin{eqnarray*}
\rot{\mathbf{w}}&=&\{\mathbf{w}^{\gamma\rho_n^i \gamma^{-1}} \mid 
i=0,1,2,\dotsc,n-1 \}, 
\end{eqnarray*}
and so
\begin{eqnarray*}
\left(\rot{\mathbf{w}}\right)^\gamma&=&
\{\mathbf{w}^{\gamma\rho_n^i } \mid i=0,1,2,\dotsc,n-1 \} \\
&=&\rot{\mathbf{w}^\gamma}.
\end{eqnarray*}
Conversely, if $\left(\rot{\mathbf{w}}\right)^\gamma$ is
rotationally closed, then
\begin{eqnarray*}
\left(\rot{\mathbf{w}}\right)^\gamma
&=&\{\mathbf{w}^\gamma, \mathbf{w}^{\rho_n\gamma},\mathbf{w}^{\rho_n^2\gamma},
\dotsc,\mathbf{w}^{\rho_n^{n-1}\gamma}\} \\
&=&\{\mathbf{w}', \mathbf{w}'^{\rho_n},\mathbf{w}'^{\rho_n^2},
\dotsc,\mathbf{w}'^{\rho_n^{n-1}}\}, 
\end{eqnarray*}
where $\mathbf{w}' = \mathbf{w}^{\rho_n^i\gamma}$ for some $i$.  So we have
\begin{eqnarray*}
\left(\rot{\mathbf{w}}\right)^\gamma
&=\{\mathbf{w}^{\rho_n^i\gamma}, \mathbf{w}^{\rho_n^i\gamma\rho_n},
\mathbf{w}^{\rho_n^i\gamma\rho_n^2},
\dotsc,\mathbf{w}^{\rho_n^i\gamma\rho_n^{n-1}}\}. 
\end{eqnarray*}
This means that $\mathbf{w}^\gamma = \mathbf{w}^{\rho_n^i\gamma\rho_n^j}$ for some $j$, and so $\mathbf{w}^{\gamma \rho_n^{-j} \gamma^{-1}} =  \mathbf{w}^{\rho_n^i}$. Clearly this applies to all $i$, $j$, and we have
$\gamma\in N_{S_n}(\langle\rho_n\rangle)$.
\end{proof}

\begin{example}

Consider the permutation 
$\gamma=\begin{pmatrix}2&5&3\end{pmatrix}
        \begin{pmatrix}4&6&7\end{pmatrix} \in S_7$.  
We have 
$\gamma^{-1}=\begin{pmatrix}2&3&5\end{pmatrix}
             \begin{pmatrix}4&7&6\end{pmatrix}$, and 
\begin{eqnarray*}
\gamma\rho_7\gamma^{-1}&=&\begin{pmatrix}2&5&3\end{pmatrix}\begin{pmatrix}4&6&7\end{pmatrix}\begin{pmatrix}1&2&3&4&5&6&7\end{pmatrix}\begin{pmatrix}2&3&5\end{pmatrix}\begin{pmatrix}4&7&6\end{pmatrix}\\
&=&\begin{pmatrix}1&3&5&7&2&4&6 \end{pmatrix}\\
&=&\rho_7^2.
\end{eqnarray*}
Since $\rho_7^2$ generates $\langle\rho_7\rangle$ this shows that $\gamma\in N_{S_7}(\langle\rho_7\rangle)$.

Now consider the word $(A,B,C,D,E,F,G)$.  The rows of the following matrix give its rotational closure:
\begin{eqnarray*}
\begin{bmatrix}
A&B&C&D&E&F&G\\
G&A&B&C&D&E&F\\
F&G&A&B&C&D&E\\
E&F&G&A&B&C&D\\
D&E&F&G&A&B&C\\
C&D&E&F&G&A&B\\
B&C&D&E&F&G&A\\
\end{bmatrix}.
\end{eqnarray*}
If we apply $\gamma$ to the columns of this matrix, we obtain
\begin{eqnarray*}
\begin{bmatrix}
A&C&E&G&B&D&F\\
G&B&D&F&A&C&E\\
F&A&C&E&G&B&D\\
E&G&B&D&F&A&C\\
D&F&A&C&E&G&B\\
C&E&G&B&D&F&A\\
B&D&F&A&C&E&G
\end{bmatrix},
\end{eqnarray*}
which is easily seen to be the rotational closure of any of its rows.

We now look at applying these ideas to the sequence $(0,0,0,1,0,1,1)$.
Permuting its coordinates with $\gamma$ in fact yields $(0,0,0,1,0,1,1)$,
which is trivially equivalent to the original sequence.  Less
trivially, $\begin{pmatrix}2&4&3&7&5&6\end{pmatrix}$ is another
  example of an element of the normaliser of $\langle \rho_7\rangle$,
  and applying this permutation to the coordinates yields the sequence
  $(0,1,1,0,1,0,0)$.  This is an example of an `equivalent' frequency
  hopping sequence that is not simply a rotation of the original
  sequence.
\end{example}

\begin{definition}\label{def:equiv}
We say that two $(n,M,q)$-FHSs are {\em equivalent} if one can be
obtained from the other by a combination of permuting the symbols of
the underlying alphabet and/or applying to the coordinates of its
sequences any permutation that is an element of
$N_{S_n}(\langle\rho_n\rangle)$.
\end{definition}
Equivalent FHSs have the same maximum Hamming correlation.

\subsection{Comparison with the notion of equivalence for $\df$s}

Two distinct difference families are said to be equivalent if there is
an isomorphism between the underlying groups that maps one $\df$ onto a
translation of the other.  In Section~\ref{sub:FHS} we discussed the
correspondence between a partition type $\df$ and an FHS.  In fact, we
will see that two partition type $\df$s over $\mathbb{Z}_n$ are
equivalent in this sense if and only if the corresponding FHSs are
equivalent in the sense of Definition~\ref{def:equiv}.  We begin by
noting that the automorphism group of $\mathbb{Z}_n$ is isomorphic to
$\mathbb{Z}_n^*$.  As in Section~\ref{sec:equiv} let $\rho_n\in S_n$
be the permutation $\begin{pmatrix}1&2&\cdots&n\end{pmatrix}$.  
Any element $\gamma\in
N_{S_n}(\langle\rho_n\rangle)$ induces a map $\phi_\gamma\colon
\mathbb{Z}_n\rightarrow \mathbb{Z}_n$ by sending $i\in \mathbb{Z}_n$
to the unique element $j\in \mathbb{Z}_n$ for which $\gamma^{-1}
\rho_n^i \gamma=\rho^j$.  The map $\phi_\gamma$ is a homomorphism,
since if $\phi_\gamma(i_1)=j_1$ and $\phi_\gamma(i_2)=j_2$ then
$\gamma^{-1}\rho_n^{i_1+i_2}\gamma=\gamma^{-1}\rho_n^{i_1}\gamma\gamma^{-1}\rho_n^{i_2}\gamma=\rho_n^{j_1}\rho_n^{j_2}=\rho_n^{j_1+j_2}$,
so $\phi_\gamma(i_1+i_2)=\phi_{\gamma}(i_1)+\phi_\gamma(i_2)$; in fact
it is an automorphism.  Every automorphism of $\langle \rho_n \rangle$
can be obtained in this fashion.
\begin{theorem} \label{thm:equiv}
Let $\cal F$ be a length $n$ FHS consisting of a single word, and let
$\cal D$ be the corresponding partition type $\df$ over
$\mathbb{Z}_n$.  Then the FHS obtained by applying a permutation
$\gamma\in N_{S_n}(\langle\rho_n\rangle)$ to the coordinate positions of
$\cal F$ corresponds to a $\df$ that is a translation of the $\df$
obtained from $\cal D$ by applying the automorphism $\phi_\gamma$ to
the elements of $\mathbb{Z}_n$.
\end{theorem}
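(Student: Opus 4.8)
The plan is to trace the correspondence of Section~\ref{sub:FHS} through the two operations at stake — permuting the coordinates of the word, and applying a group automorphism to the ground set — and to verify that the two $\df$s one obtains differ by a single translation. Concretely, I would (i) make the correspondence between a one-word FHS and a partition-type $\df$ explicit, recording how the parts transform under an arbitrary coordinate permutation; (ii) prove that a normalising permutation $\gamma\in N_{S_n}(\langle\rho_n\rangle)$, read as a permutation of $\mathbb{Z}_n$, is an \emph{affine} map $p\mapsto p_0+\phi_\gamma(p)$; and (iii) combine (i) and (ii).

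For step (i): writing ${\cal F}=\{\w\}$ and identifying the $n$ coordinate positions of $\w$ with the ground set $\mathbb{Z}_n$ of ${\cal D}=(\mathbb{Z}_n;Q_0,\ldots,Q_{q-1})$, so that $Q_i$ is the set of positions at which $\w$ takes its $i$th symbol, I would record the elementary fact that for any $\sigma\in S_n$ the word $\w^\sigma$ has associated $\df$ with parts $Q_i^\sigma=\{p^\sigma:p\in Q_i\}$, $\sigma$ being read as a permutation of $\mathbb{Z}_n$ (this merely says that applying $\sigma$ carries the symbol in position $p$ to position $p^\sigma$). With the conventions fixed by the examples of Section~\ref{sec:equiv}, $\rho_n$ acts on $\mathbb{Z}_n$ as $x\mapsto x+1$, so $\rho_n^p$ sends $0$ to $p$, and applying $\rho_n^t$ to $\w$ translates every $Q_i$ by $t$ — the familiar statement that shifting the sequence translates the $\df$.

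Step (ii) is the crux, but it is a one-line computation using only the defining relation $\rho_n^p\gamma=\gamma\,\rho_n^{\phi_\gamma(p)}$ of $\phi_\gamma$: since $p=0^{\rho_n^p}$ we get $p^\gamma=0^{\rho_n^p\gamma}=0^{\gamma\,\rho_n^{\phi_\gamma(p)}}=(0^\gamma)^{\rho_n^{\phi_\gamma(p)}}=p_0+\phi_\gamma(p)$, where $p_0:=0^\gamma$. (Equivalently one may first note that $N_{S_n}(\langle\rho_n\rangle)$ is the holomorph of $\mathbb{Z}_n$, i.e.\ the group of maps $x\mapsto ax+b$ with $a$ a unit, and then check straight from the definition that $\phi_\gamma$ is multiplication by $a$.) Feeding this into step (i), the $\df$ associated with $\w^\gamma$ has parts $Q_i^\gamma=\{p_0+\phi_\gamma(p):p\in Q_i\}=p_0+\phi_\gamma(Q_i)$ for each $i$; hence $(\mathbb{Z}_n;Q_0^\gamma,\ldots,Q_{q-1}^\gamma)$ is precisely the translate by the single element $p_0$ of $(\mathbb{Z}_n;\phi_\gamma(Q_0),\ldots,\phi_\gamma(Q_{q-1}))$, the family obtained from ${\cal D}$ by applying $\phi_\gamma$ to $\mathbb{Z}_n$ — which is exactly what the theorem asserts.

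The step I expect to be the main obstacle is not the algebra but the bookkeeping: making all the conventions mutually consistent — on which side permutations act, in which direction $\rho_n$ translates $\mathbb{Z}_n$, the formula for the parts of $\w^\sigma$ in terms of those of $\w$, and the identification under which $\phi_\gamma$ (a priori a map on the exponents of $\rho_n$) is allowed to act on elements of the ground set. Once these are pinned down coherently the key computation is immediate, and the translation parameter comes out independent of $i$ automatically, being simply $\gamma(0)$.
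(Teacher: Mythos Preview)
Your proposal is correct and follows essentially the same approach as the paper: both arguments amount to showing that an element $\gamma\in N_{S_n}(\langle\rho_n\rangle)$, read as a permutation of $\mathbb{Z}_n$, acts affinely as $p\mapsto c+\phi_\gamma(p)$, whence the parts of the new $\df$ are the $\phi_\gamma$-images of the old ones translated by the constant $c$. The paper obtains this by deriving and iterating the recurrence $(i+1)^\gamma=i^\gamma+k$ (with $\gamma^{-1}\rho_n\gamma=\rho_n^k$), while your one-line computation $p^\gamma=0^{\rho_n^p\gamma}=0^{\gamma\rho_n^{\phi_\gamma(p)}}=0^\gamma+\phi_\gamma(p)$ is a cleaner packaging of the same calculation; your translation parameter $p_0=0^\gamma$ agrees with the paper's $1^\gamma-1$ once positions $\{1,\dots,n\}$ are identified with $\mathbb{Z}_n$ via $i+1\leftrightarrow i$.
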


\begin{proof}
  It is straightforward to verify that
  $\gamma^{-1}\rho_n\gamma$ is the cycle
  $\begin{pmatrix}1^\gamma&2^\gamma&\dotsc&n^\gamma\end{pmatrix}$.  For $\gamma\in
  N_{S_n}(\langle\rho_n\rangle)$ this is equal to $\rho_n^k$ for some
  $k$.  It follows that for $i=1,2,\dotsc,n-1$ we have
\begin{equation}\label{eq:gammas}
(i+1)^\gamma=i^\gamma+k.
\end{equation}
The correspondence between $\cal F$ and $\cal D$ is obtained by
associating positions in the sequence with elements of $\mathbb{Z}_n$.
For example, the FHS $\fhs = (1,1,2,3,2)$ corresponds to the $\df$
$(\mathbb{Z}_{5}; \{0,1\},\{2,4\},\{3\})$:
\begin{eqnarray*}
\begin{array}{rccccc}
\mathbb{Z}_5& 0&1&2&3&4\\
\hline
{\cal F} & 1&1&2&3&2
\end{array}.
\end{eqnarray*}

We observe that in this representation, the $i+1^{\rm th}$ element of the sequence $\cal F$ is in correspondence with the element $i\in \mathbb{Z}_n$.
If we apply $\gamma$ to the positions of $\cal F$, then the entry in the $j+1^{\rm th}$ position is mapped to the $i+1^{\rm th}$ position when $(j+1)^\gamma=i+1$.  Repeatedly applying the relation in \eqref{eq:gammas} tells us that in this case we have $i+1=1^\gamma+jk,$ so $i=(1^\gamma-1)+jk$.

If we apply $\phi_\gamma$ to $\mathbb{Z}_n$ then element $j\in \mathbb{Z}_n$ is replaced by element $i$ when $\gamma^{-1}\rho_n^j\gamma=\rho^i$.  But we have that 
\begin{eqnarray*}
\gamma^{-1}\rho_n^j\gamma&=&(\gamma^{-1}\rho_n\gamma)^j\\
&=&(\rho_n^k)^j\\
&=&\rho_n^{kj},
\end{eqnarray*}
so it must be the case that $i=kj$.  It follows that if we then translate this $\df$ by adding $1^\gamma-1$ to each element of $\mathbb{Z}$ we obtain the same overall transformation that was effected by applying $\gamma$ to $\cal F$.
\end{proof}

\begin{example}
For example, let $\gamma=\begin{pmatrix}1&5&3&4\end{pmatrix} \in N_{S_5}(\langle \rho_5\rangle)$.  Applying $\gamma$ to $\fhs = (1,1,2,3,2)$ we have 
\begin{eqnarray*}
\begin{array}{rccccc}
\mathbb{Z}_5& 0&1&2&3&4\\
\hline
{\cal F}^\gamma & 3&1&2&2&1
\end{array},
\end{eqnarray*}
with resulting FHS $(3,1,2,2,1)$ and corresponding $\df$ $(\mathbb{Z}_5; \{0\},\{1,4\},\{2,3\})$.  We observe that $1^\gamma=5$, so that $1^\gamma-1=4$.
Alternatively, we note that 
$\gamma^{-1}\rho_5\gamma=\begin{pmatrix}1&3&5&2&4\end{pmatrix}=\rho_5^2$. Hence $\phi_\gamma$ gives 
\begin{eqnarray*}
\begin{array}{rccccc}
\phi_\gamma(\mathbb{Z}_5)& 0&2&4&1&3\\
\hline
{\cal F} & 1&1&2&3&2
\end{array},
\end{eqnarray*}
which we can rewrite in order as
\begin{eqnarray*}
\begin{array}{rccccc}
\phi_\gamma(\mathbb{Z}_5)& 0&1&2&3&4\\
\hline
{\cal F} & 1&3&1&2&2
\end{array}.
\end{eqnarray*}
The resulting FHS is $(1,3,1,2,2)$, which is simply a cyclic shift of the one obtained previously.  The $\df$ is $(\mathbb{Z}_5; \{1\},\{0,2\},\{3,4\})$.  If we add $4$ to each element, we recover the previous $\df$.
\end{example}

\section{Conclusion}
\label{sec:discussion}
We have given a general definition of a disjoint difference family, and have seen a range of examples of applications in communications and
information security for these difference families, with different applications placing different constraints on the associated properties and parameters.  Focusing on the case of FHSs and their connection with partition type disjoint difference families, we have shown that a construction due to Fuji-Hara {\it et al.}\ \cite{FHS} gives rise to precisely the same disjoint difference families as an earlier construction of Lempel and Greenberger \cite{lempelgreenberger}, thus answering an open question in \cite{FHS}.  In response to the question of Fuji-Hara {\it et al.}\ as to when two FHSs can be considered to be ``essentially the same'' we have established a notion of equivalence of frequency hopping schemes.  FHSs based on a single sequence correspond to partition type disjoint difference families, and in this case we have shown that our definition of equivalence corresponds to an established notion of equivalence for difference families, although our definition also applies more generally to schemes based on more than one sequence.

\end{document}